\newcounter{mnotecount}[section]
\renewcommand{\themnotecount}{\thesection.\arabic{mnotecount}}
\newcommand{\mnote}[1]
{\protect{\stepcounter{mnotecount}}$^{\mbox{\footnotesize $%
\!\!\!\!\!\!\,\bullet$\themnotecount}}$ \marginpar{
\raggedright\tiny\em $\!\!\!\!\!\!\,\bullet$\themnotecount: #1} }
\newtheorem{Theorem}{Theorem}[section]
\newtheorem{Corollary}[Theorem]{Corollary}
\newtheorem{Lemma}[Theorem]{Lemma}
\newtheorem{Definition}[Theorem]{Definition}
\newtheorem{rem}[Theorem]{Remark}
\numberwithin{equation}{section}
\DeclareMathOperator{\Char}{Char}
\def\C{\mathbb{C}}
 \newcommand{\la}{\langle}
 \newcommand{\ra}{\rangle}
 \renewcommand{\l}{\left}
 \newcommand{\beq}{\begin{equation}}
 \newcommand{\eeq}{\end{equation}}
 \newcommand{\beqa}{\begin{eqnarray}}
 \newcommand{\eeqa}{\end{eqnarray}}
 \newcommand{\rf}[1]{(\ref{#1})}
 \newcommand{\D}[1]{{\cal D}(#1)}
 \newcommand{\M}{{\cal M}}
 \newtheorem{dfn}[Theorem]{Definition}
 \newtheorem{thm}[Theorem]{Theorem}
\newcommand{\norm}[2]{\|#1\|_{#2}}
\newcommand{\R}{\mathbb{R}}
\newcommand{\N}{\mathbb{N}}
\newcommand{\tx}{\tilde{x}}
\newcommand{\ty}{\tilde{y}}
\newcommand{\txx}{{\bf{x}}}
\newcommand{\txixi}{{\boldsymbol{\xi}}}
\newcommand{\txi}{\tilde{\xi}}
\newcommand{\teta}{\tilde{\eta}}
\newcommand{\tep}{\tilde{\epsilon}}
\newcommand{\ang}[1]{\langle#1\rangle}
\renewcommand{\r}{\right}
 \renewcommand{\l}{\left}
\DeclareSymbolFont{ASMa}{U}{msa}{m}{n}
\DeclareMathSymbol{\hrist}{\mathord}{ASMa}{"16}
\tikzset{
  symbol/.style={
    draw=none,
    every to/.append style={
      edge node={node [sloped, allow upside down, auto=false]{$#1$}}}
  }
}
\title{Adiabatic Ground States in Non-Smooth Spacetimes}
\author{Sanchez Sanchez, Yafet \and Schrohe, Elmar} 
\newcommand{\Addresses}{{
  \bigskip
  \footnotesize

 Y. Sanchez Sanchez, \textsc{Leibniz University Hannover, Institute of Analysis, Welfengarten 1, 30167 Hannover, Germany}\par\nopagebreak
  \textit{E-mail address}: \texttt{yess@math.uni-hannover.de}

 \medskip
E. Schrohe, \textsc{Leibniz University Hannover, Institute of Analysis, Welfengarten 1, 30167 Hannover, Germany}\par\nopagebreak
 \textit{E-mail address}: \texttt{schrohe@math.uni-hannover.de}

}}
\begin{document}
\maketitle

{
\begin{abstract}
Ground states are a well-known class of Hadamard states in smooth spacetimes.
In this paper we show that the ground state of the Klein-Gordon field in a non-smooth ultrastatic spacetime is an adiabatic state. 
The order of the state depends linearly on the regularity of the metric. We obtain the result by combining microlocal estimates for the causal propagator,  propagation of singularities results for non-smooth pseudodifferential operators  and eigenvalue asymptotics for elliptic operators of low regularity.   
\end{abstract}
}

------------------------------------------------------------------------

\section{Introduction}

The analysis of quantum fields in spacetimes where the metric is not smooth has 
two main motivations. First, there are several models of physical phenomena that 
require spacetime metrics with finite regularity. These include  models of gravitational collapse \cite{adler}, astrophysical objects \cite{stars}  and general relativistic fluids \cite{fluids}. Second, the well-posedness of Einstein's equations, viewed as a system of hyperbolic PDE requires spaces with finite regularity \cite{l2}.

In this paper we focus on the on scalar fields $\phi$ that satisfy the Klein-Gordon equation
\begin{equation} \label{wave2}
(\square_g+m^2)\phi:=g^{\mu\nu}\nabla_{\mu}\nabla_{\nu}\phi+m^{2}\phi=0
\end{equation}
\noindent
on a manifold $M=\mathbb{R}\times{\Sigma}$ where $\Sigma$ is a compact Cauchy hypersurface, $g^{\mu\nu}$ is the inverse metric tensor {of a ultrastatic metric}, $\nabla_{\mu}$ is the covariant derivative and $m^{2}$ is a positive real number.\\

{In the smooth setting, Fulling, Narcowich and Wald showed that the ground state in an ultrastatic spacetime is a Hadamard state \cite{fulling}. Later, Kay and Wald showed the (non)-existence of Hadamard states  in stationary spacetimes with a bifurcate Killing horizon \cite{kaywald}. Then, Radzikowski introduced a microlocal characterisation in terms of the wavefront set \cite{rad}. This result allowed for further constructions of these states, for example by Junker \cite{junker} and G{\'e}rard and Wrochna \cite{wrochna}.

{In a non-smooth spacetime} the quantisation requires in a first instance 
that the classical system be well-posed. Several results in this direction have been obtained for different degrees of regularity in the time and space variables \cite{colombini}. Moreover, even when one has classical well-posedness, the quantisation procedure is a significant 
further challenge. However, some progress has been made for certain degrees of spacetime
regularity. For example: Dereziński and Siemssen showed  the existence of {classical and nonclassical propagators} under weak regularity assumptions \cite{der,der2}. H{\"o}rmann, Spreitzer, Vickers and one of the authors gave the  construction of  quantisation functors that satisfy the Haag-Kastler axioms in the $C^{1,1}$ case \cite{green}. In this paper we prove that the ground state of the quantum linear scalar field is an adiabatic state and that the adiabatic order is a linear function with respect to the metric regularity (Theorem \ref{main}). }\\

{\bf{Outline of the paper:}} In Section 2, we show the algebraic quantisation of fields satisfying Eq.\eqref{wave2} in spacetimes of finite regularity. We give details about the construction of the algebra of observables and precise definitions of the states considered. In Section 3, we state the main definitions and theorems regarding non-smooth pseudodifferential operators. In Section 4, we focus on ultra-static spacetimes and show that the ground state is an adiabatic state. 

\section{Quantum Field Theory in Non-smooth  Spacetimes}

The quantisation of the linear scalar field is a procedure to change the mathematical structure of the theory. On the one hand in the classical theory, the states  are represented by vectors in a symplectic space, $(V,\Xi)$, and the classical observables are defined as smooth functionals on $(V,\Xi)$. On the other hand, in the framework of algebraic quantisation, the quantum observables of the theory are represented as the elements of a unique up to $*$-isomorphism $C^{*}$-algebra which satisfies the canonical commutation relations (CCR) and the quantum states, $\omega$, are given by certain positive linear functionals on the $C^{*}$-algebra \cite{wald, bgp}. Below we give details of the quantisation procedure.

\subsection{Observables}

For a classical system with equations of motion given by Eq. (\ref{wave2}) in a globally hyperbolic spacetime $(M,g)$ of regularity $C^{1,1}$, it was shown that the space $(V,\Xi)$ is given by $V=H^1_{\text{comp}}(M)/{\text{ker G}}$ and $\Xi([f],[g])=([f],G[g])_{L_{\mathbb{R}}^{2}(M)}$ where $H_{\text{comp}}^1(M)$ denotes compactly supported function in the Sobolev space $H^1(M)$ and ${\text{ker G}}$ is the kernel of the causal propagator \cite{green}. In fact, this symplectic space is symplectically isomorphic to the classical phase space $(\Gamma, \sigma)$ given by the space $\Gamma:=H_{\text{comp}}^2(\Sigma) \oplus H_{\text{comp}}^1(\Sigma)$ of real-valued initial data with compact support and  the symplectic bilinear form $$\sigma(F_1,F_2)=\int_{\Sigma}[q_1p_2 - q_2p_1] dv$$  with $F_i:=(q_i, p_i) \in \Gamma, i = 1, 2$ and $dv$ the induced volume form on $\Sigma$.

Moreover, to the symplectic space $(V,\Xi)$ one can associate a $C^{*}$-algebra $\cal{A}$ that satisfies the CCR, known as the Weyl algebra. It is generated by the elements $W([f])$, $[f] \in V$, that satisfy 

$$
W([f])^{*}= W([f])^{-1}= W([-f]) 
$$
$$
W([f_1])W([f_2]) = e^{-\frac{i}{2}\Xi([f_1],[f_2])}W([f_1 + f_2]) 
$$

for all $[f], [f_1], [f_2] \in V $ (see e.g. \cite{bgp, green}).

As $(V,\Xi)$ and $(\Gamma, \sigma)$ are isomorphic as symplectic spaces,  one can construct a $C^{*}$-algebra, ${\cal{B}}$,{ using the map $\alpha:{\cal{A}}\mapsto {\cal{B}}$ given by  $\alpha\left(W([f])\right) := W((\rho^t_0 Gf, \rho^t_1 Gf))$ where  $\rho^t_o\phi:=\phi|_{\Sigma_t},\;\rho^t_1\phi:=\frac{\partial\phi}{\partial t}|_{\Sigma_t}$.} The algebra ${\cal{B}}$ is $*$-isomorphic to the Weyl algebra $\cal{A}$ described above. Each of these algebras represents the quantum observables of the theory.

Moreover, one can localise this construction to suitable subsets of $M$ following the approach of local quantum physics. In fact, one can do these local constructions in a functorial way and the functors satisfy the Haag-Kastler axioms ( see \cite[Theorem 6.12]{green}).

\subsection{States}

The quantum states as defined above need to be further restricted in order to be physically relevant. A candidate for physical quantum states,  $\omega$,  are quasifree states that satisfy the microlocal spectrum condition.

 To be precise, given a real scalar product $\mu:\Gamma\times\Gamma\rightarrow\mathbb{R}$ satisfying 
\begin {equation}\label{satu}
|\sigma(F_1, F_2)|^{2}\le \mu(F_1, F_1)\mu(F_2, F_2)
\end{equation}
 
for all $F_1, F_2 \in \Gamma$, there exist a quasifree state $\omega_{\mu}$ acting on the algebra $\cal{B}$ associated with $\mu$ given by $\omega_{\mu}(W(F))=e^{-\frac{1}{2}\mu(F,F)}$. Moreover, one can determine the (“symplectically smeared”) two-point function of $\omega_{\mu}$ by

\begin{equation}\label{mu}
\lambda(F_1, F_2) = \mu(F_1, F_2) + \frac{i}{2}\sigma(F_1, F_2)
\end{equation}

for $F_1, F_2 \in \Gamma$.  The Wightman two-point function $\omega^{(2)}_{\mu}$ associated to the state $\omega_{\mu}$,  is given by

\begin{equation}\label{twopoint}
\omega^{(2)}_{\mu}(f_1,f_2)=\lambda\left(\binom{\rho_0 Gf_1}{\rho_1 Gf_1},\binom{\rho_0 Gf_2}{\rho_1 Gf_2}\right)
\end{equation}

for $f_1,f_2\in H^1_\text {comp}(M)$. {By restricting the two point function $\omega^{(2)}_{\mu}$  to ${\D{M}}\otimes {\D{M}}$ one obtains a bidistribution in $M\times M$.}

To define the microlocal spectrum condition,  it is useful to introduce the sets
\vspace{-.2cm}
\begin{eqnarray}\label{C}
  &C=\big\{(\tx,\txi,\ty,\tilde{\eta})\in T^{*}(M\times M)\backslash0;  
  g^{ab}(\tx)\txi_{a}{\txi}_{b}=g^{ab}(\ty)\tilde{\eta}_{a}\tilde{\eta}_{b}=0, (\tx,\txi)\sim(\ty,\tilde{\eta})\big\} \\
  &C^{+}=\left\{(\tx,\txi,\ty,\tilde{\eta})\in C; \txi^{0}\ge0,\tilde{\eta}^{0}\ge0\right\}\nonumber,
  \end{eqnarray} 
\noindent
where $(\tx,\txi)\sim(\ty,\tilde{\eta})$ means that $\txi,\tilde{\eta}$ are cotangent to the null geodesic $\gamma$ at $\tx$ 
resp. $\ty$ and parallel transports of each other along $\gamma$.
 
 Using the above sets one can define the microlocal spectrum condition which goes back to Radzikowski\cite{rad}:
 \begin{Definition}
 A quasifree state $\omega_{H}$ on the algebra of observables satisfies the microlocal spectrum condition if its two point function $\omega^{(2)}_{H}$ is a distribution in $\mathcal{D}'(M\times M)$ and satisfies the following wavefront set condition
 
 $$WF'(\omega^{(2)}_{H})=C^{+},$$
 \noindent

where $WF'(\omega^{(2)}_{H}):= \{(x_1, \eta; x_2, -\tilde{\eta}) \in T^{*}(M\times M); (x_1, \eta; x_2, \tilde{\eta}) \in WF(\omega_{2H})\}.$
 
 \end{Definition}
 These states are called Hadamard states and include ground states in smooth spacetimes \cite{fulling, passive, junker, wrochna, fewster}.

 A larger class of states called adiabatic states of order $N$ characterised in terms of their Sobolev-wavefront set has been obtained by Junker and {one of the authors} \cite{adiabatic}. These states are the natural generalisation of Hadamard states suitable for spacetimes with limited regularity. 
 
 \begin{Definition}\label{adiadi}
 A quasifree state $\omega_{N}$ on the algebra  of observables
is called an adiabatic state of order $N\in\mathbb{R}$ if its two-point function $\omega^{(2)}_{N}$ is a
bidistribution that satisfies the following $H^s$-wavefront set condition for all $s\le N +\frac{3}{2}$
$$WF'^{s}(\omega^{(2)}_{N})\subset C^{+},$$
\noindent
 \end{Definition} 

where  $WF^{s}$ is a refinement of the notion of the wavefront set in terms of Sobolev spaces. To be precise, a distribution $u$ is {\em microlocally in} $H^{s}$ at $(x_0,\xi_0)\in T^*M\backslash 0$ if 
there exists a conic neighbourhood ${\Gamma}$ of $\xi_0$ and a smooth function $\varphi\in {\D{M}}$ with $\varphi (x_0)\neq 0$ 
such that $$\int_{{\Gamma}}\ang{\xi}^{2s}|{\cal{F}}(\varphi u)(\xi)|^{2}d^{n}\xi <\infty.$$
Otherwise we say that $(x_0,\xi_0)$ lies in the $s$-wave front set $WF^s(u)$.  

If $u$ is microlocally in $H^{s}$ in an open conic subset ${\Gamma}\subset T^*M\backslash 0$ we write $u\in H^s_{mcl}({\Gamma})$.

\section{Pseudodifferential Operators with Non-smooth  Symbols}

\subsection{Symbol Classes}

Let $\{\psi_j; j=0,1,\ldots\}$ be a Littlewood-Paley partition of unity on 
$\mathbb R^n$, i.e., a partition of unity $1=\sum_{j=0}^\infty\psi_j$, 
where $\psi_0\equiv 1$ for $|\xi|\le 1$ and $\psi_0\equiv 0$ for $|\xi|\ge2$ and 
$\psi_j(\xi) =  \psi_0(2^{-j}\xi)-\psi_0(2^{1-j}\xi)$. 
The support of $\psi_j$, $j\ge1$, then lies in an annulus around the origin of interior radius 
$2^{j-1}$ and exterior radius $2^{1+j}$.  

\begin{Definition}\label{Holder}
{\rm (a)} 
For $\tau\in (0,\infty)$, the H{\"o}lder space $C^\tau(\mathbb{R}^n)$ is the set of all functions $f$ with
\begin{equation}
\|f\|_{C^\tau}:=\displaystyle\sum_{|\alpha|\le [\tau]}\|\partial^\alpha_{x}f\|_{L^\infty(\mathbb{R}^n)}+\displaystyle\sum_{|\alpha|= [\tau]}\sup_{x\neq y}\frac{|\partial^\alpha_{x}f(x)-\partial^\alpha_{x}f(y)|}{|x-y|^{\tau-[\tau]}}<\infty.
\end{equation}

{\rm (b)} For $\tau\in \mathbb{R}$ the Zygmund space 
$C^\tau_{*}(\mathbb{R}^n)$ consists of all functions $f$ with 
\begin{equation}
\|f\|_{C^\tau_*}=\sup_j 2^{j\tau}\|\psi_j (D) f\|_{L^\infty}<\infty
\end{equation}
\end{Definition}
Here $\psi_j(D)$ is the Fourier multiplier with symbol $\psi_j$, i.e., 
$\psi_j(D)u = \mathcal F^{-1}\psi_j\mathcal F u$, 
where $(\mathcal Fu)(\xi) = (2\pi)^{-n/2} \int e^{-ix\xi} u(x)\, d^nx$ is the Fourier transform. 

We have the following relations $C^\tau=C^\tau_{*}$ if $\tau\notin \mathbb{Z}$ and $C^\tau\subset C^\tau_{*}$ if $\tau\in \mathbb{N}$.

We next introduce  symbol classes 
of finite H\"older or Zygmund regularity{,} following Taylor 
\cite{taylor}. We use the notation $\langle\xi\rangle:=(1+|\xi|^2)^{\frac{1}{2}}$, $\xi\in \mathbb R^n$.

\begin{Definition}{\rm (a)} Let $0\le \delta <1$
A symbol $p(x,\xi)$ belongs to $C^\tau_* S^{m}_{1,\delta}$ if  $$\|D^{\alpha}_{\xi}p(x,\xi)\|_{C_*^{\tau}}\le C_\alpha\langle\xi\rangle^{m-|\alpha|+\tau\delta} \text{ and } |D^{\alpha}_{\xi}p(x,\xi)|\le C_\alpha\langle\xi\rangle^{m-|\alpha|}.$$

{\rm (b)}  We obtain the symbol class $C^\tau S^{m}_{1,\delta}$ for $\tau>0$ by requiring that
$$\|D^{\alpha}_{\xi}p(x,\xi)\|_{C^{j}}\le C_\alpha\langle\xi\rangle^{m-|\alpha|+j\delta}, \quad0\le j\le \tau, \text{ and } |D^{\alpha}_{\xi}p(x,\xi)|\le C_\alpha\langle\xi\rangle^{m-|\alpha|}.$$

{\rm (c)} A symbol $p(x,\xi)$ is in  $C^{\tau}S_{cl}^{m}$ provided $p(x,\xi)\in C^{\tau}S^{m}_{1,0}$ and $p(x,\xi)$ has a classical expansion 
$$p(x,\xi)\sim \sum_{j\ge0}p_{m-j}(x,\xi)$$ 
in terms $p_{m-j}$ homogeneous of degree $m-j$ in $\xi$ for $|\xi|\ge 1$, in the sense that the difference between $p(x,\xi)$ and the sum over $0\le j< N$ belongs to $C^{\tau}S^{m-N}_{1,0}$.

\end{Definition}

{

\subsection{Characteristic Set and Pseudodifferential Operators}
Let $p\in C^\tau S^{m}_{\rho,\delta}$, $\tau>0$,  with $\delta<\rho$. Suppose that there is a conic neighborhood $\Gamma$ of $(x_0,\xi_0)$ and constants $c,C>0$ such that 
$|p(x,\xi)|\ge c|\xi|^m$ for $(x,\xi) \in \Gamma$, $|\xi|\ge C$. Then $(x_0,\xi_0)$ is called {\em non-characteristic}.  If $p$ has a principal homogeneous symbol $p_m$, the condition is equivalent to $p_m(x_0,\xi_0)\neq 0$. The complement of the set of non-characteristic points  is  the set of characteristic points denoted by ${\Char}(p)$.

\begin{rem}
The Klein-Gordon operator on $M$ is given by   
\begin{equation}\label{kg}
P\phi=\partial_{tt}\phi-\Delta_h\phi +m^2\phi
\end{equation}
It has the symbol $P(\tilde{x},\tilde{\xi})=(-\xi_0^2+h^{ij}\xi_i\xi_j)+i\frac{1}{\sqrt{h}}\partial_{x^i}(h^{ij}\sqrt{h})\xi_j+m^2$. 
For a metric of regularity $C^{\tau}$, the symbol $P(\tilde{x},\tilde{\xi})$ belongs to $C^{\tau-1}S^{2}_{cl}$ and $${\Char}(P)=\{(t,x,\xi_0,\xi)\in T^*M; -\xi_0^2+h^{ij}\xi_i\xi_j=0\}.$$
\end{rem}
}

The pseudodifferential operator $p(x,D_x)$ with the symbol $p(x,\xi)\in C^\tau S^m_{1,\delta} $ is given by 
\begin{equation}
p(x,D_x)u=(2\pi)^{-n/2} \int_{\mathbb{R}^n}e^{ix\cdot\xi}p(x,\xi)({\cal{F}}{u})(\xi)d^n\xi,
\quad u\in \mathcal S(\mathbb R^n). 
\end{equation}
It extends to continuous maps 
\begin{equation}
p(x,D_x): H^{s+m}(\mathbb{R}^n)\rightarrow H^{s}(\mathbb{R}^n) , \quad -\tau(1-\delta)<s<\tau.
\end{equation}




%



\subsection{Symbol Smoothing} Given $p(x,\xi)\in C^{\tau}S_{1,\delta}^{m}$ and $\gamma\in (\delta,1)$ let 
\begin{equation}
p^{\#}(x,\xi)=\displaystyle\sum_{j=0}^{\infty}J_{\epsilon_{j}}p(x,\xi)\psi_j(\xi).
\end{equation}
Here $J_\epsilon$ is the smoothing operator given by 
$(J_\epsilon f)(x)=(\phi(\epsilon D)f)(x)$ with $\phi\in C^{\infty}_0(\mathbb{R}^n)$, $\phi(\xi)=1$ for $|\xi|\le 1$, and we take $\epsilon_j=2^{-j\gamma}$.

Letting $p^{b}(x,\xi)=p(x,\xi)-p^{\#}(x,\xi)$ we obtain the decomposition
\begin{equation}\label{smoothing}
p(x,\xi)=p^{\#}(x,\xi)+p^{b}(x,\xi),
\end{equation}
where $p^{\#}(x,\xi)\in S^{m}_{1,\gamma}$ and $p^{b}(x,\xi)\in C^{\tau}S^{m-\tau(\gamma-\delta)}_{1,\gamma}$.

If $p\in C^\tau S^m_{1,0}$, then we additionally have $p^b\in C^{\tau-t}S^{m-t\delta}_{1,0} $ with $\tau-t>0$ by \cite[Proposition 1.3.B]{taylor}.  Furthermore,  we have  better estimates, see \cite[Proposition 1.3.D]{taylor}: 
\begin{align}\label{morederivatives}
D_x^\beta p^{\#}(x,\xi)\in \begin{cases}S^{m}_{1,\delta},\quad &|\beta|\le \tau \text{ and}\\
S^{m+\delta(|\beta|-r)}_{1,\delta}, \quad & |\beta|>\tau\end{cases} 
\end{align}



\section{Ground States in Ultrastatic Spacetimes}

 Let $M=\R\times \Sigma$ where $\Sigma$ is a $3$-dimensional compact manifold and the Lorentzian metric $g$ is of the form  
$$ds^2=dt^2-h_{ij}(x)dx^idx^j$$ 
where $h_{ij}(x)$ are the components of a time independent   Riemannian metric
of H\"older regularity $C^{\tau}$ (when $\tau \in \N$ we will  consider the Zygmund spaces $C_*^\tau$, introduced in Definition \ref{Holder}).

Moreover, the vector field $\partial_t$ induces a one-parameter group of
isometries $\tau_t:M\to M,t\in{\R}$, such that
$\tau_t(\Sigma_{t_o})=\Sigma_{t_o+t}$. This group induces a one-parameter group of automorphisms { in the $C^*$-algebras }  as follows. Define ${\cal T}(t):\Gamma\to
\Gamma$ by
\[{\cal T}(t)\tilde{F}_{t_o}:=\tilde{F}_{t_o+t},
\]
where $\tilde{F}_{s}:=(\rho^s_o\phi,\rho^s_1\phi)$ and $\phi\in \text{ ker } (\square_g+m^2)$.

Since the symplectic form $\sigma$ is invariant under the action of
${\cal T}(t)$ and since ${\cal T}(t){\cal T}(s)={\cal T}(t+s)\; t,s\in {\mathbb{R}},\;{\cal T}$ is a one-parameter group
of symplectic transformations (also called Bogoliubov
transformations). It gives rise to a group of automorphisms $\tilde{\alpha} (t),
t\in{\R}$, (Bogoliubov automorphisms) on the algebra ${\cal B}$ via
\[\tilde{\alpha}(t) W(F)=W({\cal T}(t) F).
\]
In this case, there exists a preferred class of states on ${\cal A}$,
namely those invariant under $\alpha(t)$. A quasifree state $\omega_\mu$
will be invariant under this symmetry if and only if
\[\mu({\cal T}(t)F_1,{\cal T}(t)F_2)=\mu(F_1,F_2)\; \forall t\in{\R}
\;\forall F_1,F_2\in\Gamma.
\]
{

The specification of $\mu$ is equivalent to the specification of a one-particle structure as established by the following theorem of Kay and Wald \cite[Proposition 3.1]{kaywald}:

\begin{thm}\label{one}
Let $\omega_\mu$ be a quasifree state on ${\cal{B}}[\Gamma,\sigma]$.
Then there exists a {\rm\bf one-particle Hilbert space structure},
i.e.~a Hilbert space ${\cal H}$ and a real-linear map $k:\Gamma\to
{\cal H}$ such that \\
i) $k\Gamma +ik\Gamma$ is dense in ${\cal H}$,\\
ii) $\mu(F_1,F_2)={\rm Re}\la kF_1,kF_2\ra_{\cal H}\;\forall
F_1,F_2\in \Gamma$,\\
iii) $\sigma(F_1,F_2)=2{\rm Im}\la kF_1,kF_2\ra_{\cal H}\;
\forall F_1,F_2\in\Gamma$.\\
The pair $(k,{\cal H})$ is uniquely determined up to
unitary equivalence.
Moreover, $\omega_\mu$ is pure if and only if $k(\Gamma)$ is dense. 
\end{thm}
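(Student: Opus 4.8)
The statement is purely functional-analytic, and the plan is to recover $\mathcal{H}$ and $k$ from $\mu$ together with the operator representing $\sigma$ relative to $\mu$. First I would complete $\Gamma$ in the norm $F\mapsto\mu(F,F)^{1/2}$ to a real Hilbert space $(\mathcal{K},\langle\cdot,\cdot\rangle_\mu)$ in which $\Gamma$ is dense. By \eqref{satu} the symplectic form $\sigma$ extends to a bounded bilinear form on $\mathcal{K}$, so Riesz duality yields a bounded operator $A\colon\mathcal{K}\to\mathcal{K}$ with $\sigma(F_1,F_2)=2\langle F_1,AF_2\rangle_\mu$. Antisymmetry of $\sigma$ forces $A^{*}=-A$, and \eqref{satu} makes $A$ a contraction, so $-A^{2}=A^{*}A\ge0$ is self-adjoint and $|A|:=(-A^{2})^{1/2}$ satisfies $0\le|A|\le I$. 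Writing $N:=\ker A=\ker|A|$ and $\mathcal{K}_1:=N^{\perp}$, the form $\sigma$ vanishes on $N$ and between $N$ and $\mathcal{K}_1$, while $|A|$ is injective on $\mathcal{K}_1$.

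On the complexification $\mathcal{K}_{\mathbb{C}}$, with the Hermitian extension of $\langle\cdot,\cdot\rangle_\mu$ and the conjugation $C$ fixing $\mathcal{K}$, the operator $iA$ is self-adjoint and $CAC=A$ gives $C(iA)C=-iA$. Let $P_{+}$ be the spectral projection of $iA$ for $(0,\infty)$, so that $P_{-}:=CP_{+}C$ is the one for $(-\infty,0)$, $P_{+}+P_{-}$ projects onto $\mathcal{K}_{1,\mathbb{C}}$, and $iA=|A|$ on $P_{+}\mathcal{K}_{1,\mathbb{C}}$. I would then take
\[
\mathcal{H}:=\mathcal{H}_{+}\oplus\overline{\mathcal{H}_{-}}\oplus\bigl(N\otimes_{\mathbb{R}}\mathbb{C}\bigr),\qquad
\mathcal{H}_{\pm}:=\overline{(I\pm|A|)^{1/2}\,P_{+}\mathcal{K}_{1,\mathbb{C}}}\ \subset\ \mathcal{K}_{1,\mathbb{C}},
\]
with $\overline{\mathcal{H}_{-}}$ the conjugate Hilbert space, and, for $F=F_1+F_0\in\Gamma$ with $F_1\in\mathcal{K}_1$ and $F_0\in N$, set
\[
kF:=\Bigl((I+|A|)^{1/2}P_{+}F_1,\ \overline{(I-|A|)^{1/2}P_{+}F_1},\ F_0\Bigr).
\]
For real $F_1$ one has $P_{-}F_1=CP_{+}F_1$, hence $\langle F_1,G_1\rangle_\mu=2\operatorname{Re}\langle P_{+}F_1,P_{+}G_1\rangle_\mu$ and $\langle F_1,AG_1\rangle_\mu=2\operatorname{Im}\langle P_{+}F_1,|A|P_{+}G_1\rangle_\mu$, and a direct computation with these two identities gives $\langle kF,kG\rangle_{\mathcal{H}}=\mu(F,G)+\tfrac{i}{2}\sigma(F,G)$; so (ii) and (iii) hold and $k$ is a $\mu$-isometry. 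Since $\Gamma$ is $\mu$-dense in $\mathcal{K}$, the set $\{P_{+}F_1:F_1\in\mathcal{K}_1\}$ exhausts $P_{+}\mathcal{K}_{1,\mathbb{C}}$, and the three components of $kF+ikG$ can be prescribed independently as $F,G$ vary, $k\Gamma+ik\Gamma$ is dense in $\mathcal{H}$, which is (i). (When $|A|=I$ and $N=\{0\}$ this degenerates to $\mathcal{H}=\mathcal{H}_{+}=P_{+}\mathcal{K}_{\mathbb{C}}$, $kF=\sqrt{2}\,P_{+}F$, the familiar positive-frequency construction.)

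For uniqueness, if $(k,\mathcal{H})$ and $(k',\mathcal{H}')$ both satisfy (i)--(iii), then $kF\mapsto k'F$ preserves the Hermitian inner product by (ii)+(iii); it therefore extends, complex-linearly and isometrically, to $k\Gamma+ik\Gamma$, and by (i) to a unitary $U\colon\mathcal{H}\to\mathcal{H}'$ with $k'=Uk$, which is the asserted rigidity. Finally, $\omega_\mu$ is pure iff its GNS representation is irreducible, which for a quasifree state amounts to $A$ being a complex structure, $A^{2}=-I$, i.e.\ to $N=\{0\}$ together with $|A|=I$; by the construction above this is exactly the case $\mathcal{H}=\mathcal{H}_{+}$, and one checks it is also exactly the case in which the real subspace $k(\Gamma)$ --- not merely $k\Gamma+ik\Gamma$ --- is dense in $\mathcal{H}$.

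I expect the one genuinely laborious part to be the bookkeeping in the second paragraph: one must accommodate $0$ in the continuous spectrum of $|A|$ on $\mathcal{K}_1$ and $1$ in its spectrum (so that $(I-|A|)^{1/2}$ need not be injective and $\mathcal{H}_{-}$ is a genuine completion), carry the conjugation $C$ through every identity, and verify that $kF$ really lies in $\mathcal{H}$ with $\|kF\|^{2}=\mu(F,F)$ and that the density statement (i) survives the doubling $\mathcal{H}_{+}\oplus\overline{\mathcal{H}_{-}}$. None of these points is individually deep, but the construction only succeeds once each has been checked.
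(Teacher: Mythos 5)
The paper itself does not prove this theorem; it is quoted from Kay and Wald \cite[Proposition 3.1]{kaywald}, so there is no in-paper argument to compare against. Your sketch is essentially the Kay--Wald construction and it is correct in outline: complete $\Gamma$ in the $\mu$-norm, use Riesz to get the bounded skew-adjoint $A$ with $\sigma=2\langle\cdot,A\cdot\rangle_\mu$, and assemble $\mathcal{H}$ and $k$ from the spectral decomposition of $iA$ on the complexification. The verification of (ii)--(iii) works: one finds $\langle kF,kG\rangle_{\mathcal H}=\mu(F,G)+i\langle F,AG\rangle_\mu=\mu(F,G)+\tfrac{i}{2}\sigma(F,G)$, and for (i) the key observation $P_{+}\mathcal{K}_1=P_{+}\mathcal{K}_{1,\mathbb C}$ (via $F_1=w+Cw$) lets $u=P_+F_1+iP_+G_1$ and $v=P_+F_1-iP_+G_1$ range independently over $P_+\mathcal{K}_{1,\mathbb C}$. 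Three remarks. First, there is a shorter route to existence: equip $\Gamma_{\mathbb C}$ with the Hermitian form $\lambda=\mu+\tfrac{i}{2}\sigma$, check positive semi-definiteness from \eqref{satu}, quotient by its null space and complete to obtain $\mathcal{H}$, and take $k$ to be the induced inclusion of $\Gamma$; (i)--(iii) are then immediate. Your $\mathcal{H}_{+}\oplus\overline{\mathcal{H}_{-}}\oplus N_{\mathbb C}$ is unitarily equivalent to this (and has the merit of displaying the positive/negative frequency split), at the cost of the bookkeeping you flag. Second, a normalization caveat: with \eqref{satu} as printed in the paper (no factor $\tfrac14$) and your convention $\sigma=2\langle\cdot,A\cdot\rangle_\mu$, one gets $\|A\|\le\tfrac12$, so $|A|=I$ is unreachable; the inequality should carry Kay--Wald's factor $\tfrac14$, under which $\|A\|\le 1$ and your purity discussion are consistent. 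Third, the ``moreover'' step leans on the (classical but nontrivial) equivalence ``pure $\iff$ GNS irreducible $\iff$ $A^2=-I$,'' which you invoke without proof and then translate into density of $k(\Gamma)$; this is the one place where a fully written-out proof would need genuinely more work, though you acknowledge this.
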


\begin{rem}
Notice that the specification of a Hilbert space $\cal{H}$ together with a real-linear map $k:\Gamma\to
{\cal H}$  such that  $k\Gamma +ik\Gamma$ is dense in ${\cal H}$ and $2{\rm Im}\la kF_1,kF_2\ra_{\cal H}=\sigma(F_1,F_2)$ gives rise  via Eq.\eqref{mu} to a real scalar product $\mu$ satisfying Eq.\eqref{satu}.
\end{rem}
}

Moreover, the automorphism group $\tilde{\alpha}(t)$ can be unitarily implemented in the
one-particle Hilbert space structure $(k,{\cal H})$ of an invariant state
$\omega_\mu$, i.e.~there exists a unitary group $U(t), t\in{\R},$ on
${\cal H}$ satisfying
\beqa
U(t)k &=& k{\cal T}(t) \label{2.15}\\
U(t)U(s) &=& U(t+s). \nonumber
\eeqa

If $U(t)$ is strongly continuous it takes the form $U(t)=e^{-iht}$ for some
self-adjoint operator $h$ on ${\cal H}$.\\
We define now the notion of ground states following Kay \cite{ground}:
\begin{dfn}\label{dfn2.4}
Let the phase space $(\Gamma,\sigma,{\cal T}(t))$ be given.
A quasifree {\rm\bf ground state} is a quasifree state over ${\cal B}
[\Gamma,\sigma]$ with one-particle Hilbert space structure $(k,{\cal H})$
and a strongly continuous unitary group $U(t)=e^{-iht}$ (satisfying
\rf{2.15}) such that $h$ is a positive operator (the ``one-particle
Hamiltonian'').
\end{dfn}

In the ultrastatic case we define the ground state, $\omega_G$ by
the one-particle Hilbert space structure $(k_{G},{\cal H}_{G})$ 
\beqa\label{hg}
k_G:\Gamma &\to& {\cal H}_G:=L^2_{\bf C}(\Sigma_{t_0})\nonumber\\
F=(q,p)&\mapsto&\frac{1}{\sqrt{2}}\l(A^{1/4}q-iA^{-1/4}p\r) \label{3.10},
\eeqa
 where $A:=-\Delta_h\phi +m^2$ and  $t_0\in \R$ (invariance under time translations makes any choice of $t\in \R$ equivalent to any other) and the strongly continuous unitary group  is given by $U(t):=e^{iA^{\frac{1}{2}}t}$. 

The Wightman two-point function of $\omega_G$ is:\\

\beq
\omega^{(2)}_{G}(h_1,h_2) = \lambda_G\l({\rho^t_oGh_1\choose
\rho_1^tGh_1 },{\rho_o^tGh_2\choose \rho_1^tGh_2}\r),\label{3.11}
\eeq
for $h_1,h_2\in\D{\M}$. 

Moreover using Eq.~\rf{twopoint}, Eq.\eqref{hg} and Theorem \ref{one}
the ``symplectically
smeared two-point function'' $\lambda_G$ is given on the initial
data $F_i={q_i\choose p_i}\in \Gamma$ by Eq.\eqref{mu},
\beqa
\lambda_G(F_1,F_2)&=& \l\la k_G F_1,k_GF_2\r\ra_{L^2_{\C}(\Sigma)} \nonumber\\
&=&\frac{1}{2}\l\la A^{1/4}q_1-iA^{-1/4}p_1,A^{1/4}q_2-iA^{-1/4}p_2\r\ra
_{L^2_{\C}(\Sigma)}  \nonumber\\
&=&\frac{1}{2}\l\la(A^{1/2}q_1-ip_1),
A^{-1/2}\l(A^{1/2}q_2-ip_2\r)\r\ra
_{L^2_{\C}(\Sigma)}, \label{3.12}
\eeqa
since $A$ is selfadjoint. Combining \rf{3.11} and \rf{3.12} we obtain
\beq
\omega^{(2)}_{G}(h_1,h_2)=\frac{1}{2}\l\la\l(A^{1/2}\rho_o^t-i\rho_1^t\r)Gh_1,A^{-1/2}
\l(A^{1/2}\rho^t_o-i\rho^t_1\r)Gh_2\r\ra_{L^2_{\C}(\Sigma_t)}. \label{3.13}
\eeq

The two-point function, $\omega^{(2)}_{G}$,  of the ground state,  $\omega_G$, is the Schwartz kernel of the operator  
$\displaystyle{{e^{i A^\frac{1}{2}(t-s)}}{A^{-\frac{1}{2}}}}$.

Explicitly, for $u,v\in {\mathcal D}(M)$  we have 
\begin{align}
&\omega^{(2)}_{G}(u, v)=\int_{M}\left(\frac{e^{i A^\frac{1}{2}(t-s)}}{A^{\frac{1}{2}}}u\right)(s,y)v(s,y)dsdy,\nonumber
\end{align}
which gives the singular integral kernel representation
\begin{equation}\label{kernel}
\omega^{(2)}_{G}(t,x;s,y)=\sum_j \lambda_j^{-1} e^{i\lambda_j(t-s)}\phi_j(x)\phi_j(y).
\end{equation}

{where $\{\phi_j, j=1,2,\ldots\}$ is an orthonormal basis of eigenfunctions of $L^2(\Sigma)$  associated to the eigenvalues $\lambda_j^2$  of the operator $m^2I-\Delta_h$. } 

The proof that the ground state in an ultrastatic smooth globally hyperbolic space-time is a Hadamard state has been shown by different methods \cite{fulling, passive, wrochna, junker}.  In the following section we show that  the ground states is an adiabatic state in the non-smooth case.

{\subsection{Microlocal analysis for Bisolutions of the Klein-Gordon Operator}}

We write local coordinates on $\mathbb R\times \Sigma$ in the form  
\begin{eqnarray}\label{xcoord}
\tx = (t,x), \ty=(s,y)
\end{eqnarray}
and the associated covariables as 
\begin{eqnarray}\label{xicoord}
\txi = (\xi_0,\xi), \teta= (\eta_{0},\eta). 
\end{eqnarray}
On the product $(\mathbb R\times \Sigma)\times( \mathbb R\times \Sigma)$ we use 
\begin{eqnarray}\label{doublecoord}
\txx= (\tx,\ty), \txixi=(\txi,\teta).
\end{eqnarray}


In the sequel we shall apply the Klein-Gordon operator also to functions and distributions on $M\times M$. Using the coordinates in Eqs.\eqref{xcoord},\eqref{xicoord} and \eqref{doublecoord}, we distinguish  the cases, where $P$ acts on the first set of variables $(t, x)$  or on the second {set} $(s, y)$, and write $P_{(t,x)}$ and $P_{(s,y)}$, respectively. 
The associated symbols $P_{(t,x)}(\txx,\txixi)$ and $P_{(s,y)}(\txx,\txixi)$  formally depend on the full set of (co-)variables $(\txx,\txixi)$, however,  only the (co-)variables associated with either $(t, x)$ or $( s,y)$ show up:
\begin{eqnarray}
\nonumber
P_{(t,x)}(\txx,\txixi)&=&P_{(t,x)}(\tx,\txi,\ty,\teta)=\underbrace{({-\xi^{0}}^{2}+h^{ij}(x)\xi_i\xi_j)}_{p_2(\txx,\txixi)}+\underbrace{i\frac{1}{\sqrt{h}}\partial_{x^i}(h^{ij}\sqrt{h}(x))\xi_j}_{p_1(\txx,\txixi)}+\underbrace{m^2}_{p_0(\txx,\txixi)}.
\label{pr0}\\\nonumber
P_{(s,y)}(\txx,\txixi)&=&P_{(s,y)}(\tx,\txi,\ty,\teta)=(-{\eta_{0}}^{2}+h^{ij}(x)\eta_i\eta_j)+{i\frac{1}{\sqrt{h}}\partial_{y^i}(h^{ij}\sqrt{h}(y))\eta_j}+{m^2}.
\end{eqnarray}
In particular, 
\begin{eqnarray}
\label{char3}
  \Char(P_{(t,x)})&=&\Char(P)\times T^{*}M\cup\{(\txx,\txixi)\in T^{*}(M\times M)\backslash\{0\}, \txi=0\}\\
\Char(P_{(s,y)})&=&T^{*}M\times \Char (P)\cup\{(\txx,\txixi)\in T^{*}(M\times M)\backslash\{0\}, \teta=0\}.\nonumber
\end{eqnarray}
%
%

{Now we will state a microelliptic estimate tailored for bisolutions of the Klein-Gordon operator 

\begin{Theorem}\label{elliptic}Let the metric $g$ be of class $C^\tau$, $\tau>1$,
$0\le \sigma<\tau-1$ and $v\in H_{loc}^{2+\sigma-\tau+\epsilon}(M\times M)$ for some ${\epsilon}>0$ with 
 $P_{(t,x)}(\txx,D_\txx)v= P_{(s,y)}(\txx,D_\txx)v=0$. Then 
$$WF^{\sigma+2}(v)\subset \Char(P_{(t,x)})\cap\Char(P_{(s,y)}).$$
\end{Theorem}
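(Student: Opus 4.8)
The plan is to establish the statement microlocally in contrapositive form: if a point $(\txx_0,\txixi_0)\in T^*(M\times M)\setminus 0$ is \emph{not} in $\Char(P_{(t,x)})\cap\Char(P_{(s,y)})$, then $v$ is microlocally in $H^{\sigma+2}$ at $(\txx_0,\txixi_0)$. Since the complement of the intersection equals $\bigl(T^*(M\times M)\setminus\Char(P_{(t,x)})\bigr)\cup\bigl(T^*(M\times M)\setminus\Char(P_{(s,y)})\bigr)$, one of the two operators is microlocally elliptic of order $2$ at $(\txx_0,\txixi_0)$, and by the symmetry exchanging the two copies of $M$ it suffices to treat $(\txx_0,\txixi_0)\notin\Char(P_{(t,x)})$. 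By \eqref{char3} this forces $\txi_0\neq 0$ and $p_2(\txx_0,\txixi_0)=-\xi_0^2+h^{ij}(x_0)\xi_{0,i}\xi_{0,j}\neq 0$; shrinking to a conic neighbourhood $\Gamma$ of $(\txx_0,\txixi_0)$ on which $|\txi|\geq c_0|\txixi|$ and using continuity of $h^{ij}$ together with homogeneity of $p_2$, one gets $|P_{(t,x)}(\txx,\txixi)|\geq c|\txixi|^2$ on $\Gamma$ for $|\txixi|$ large. So on $\Gamma$ the operator $P_{(t,x)}$ is genuinely elliptic of order $2$, but with a symbol only of class $C^{\tau-1}S^2_{cl}$.

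The heart of the argument is the microelliptic parametrix construction carried out for this rough symbol by means of the symbol smoothing of Section~3.3. Fix $\gamma\in(0,1)$ and split $P_{(t,x)}=P^{\#}+P^{b}$ with $P^{\#}\in S^{2}_{1,\gamma}$ carrying the same principal part $p_2$ (still microelliptic on $\Gamma$) and $P^{b}\in C^{\tau-1}S^{2-(\tau-1)\gamma}_{1,\gamma}$; keeping track of the fact that in $p_2$ the metric coefficients appear undifferentiated, one may in fact write $P^{b}=P^b_2+P^b_1$ with $P^b_2\in C^{\tau}S^{2-\tau\gamma}_{1,\gamma}$ and $P^b_1\in C^{\tau-1}S^{1-(\tau-1)\gamma}_{1,\gamma}$. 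With a conic cutoff $\chi$ supported in $\Gamma$ and equal to $1$ near the ray through $(\txx_0,\txixi_0)$, standard symbolic calculus for $S^{m}_{1,\gamma}$ produces $Q\in S^{-2}_{1,\gamma}$ with $QP^{\#}=\chi(\txx,D_\txx)+R_1$ modulo a smoothing operator, $\chi(\txx,D_\txx)$ being microlocally the identity at $(\txx_0,\txixi_0)$. Applying $Q$ to $P_{(t,x)}v=0$ gives
\[
\chi(\txx,D_\txx)v=-R_1v-QP^{b}v ,
\]
where $R_1v$ is smooth and $QP^{b}v$ is more regular by $\tau\gamma$ derivatives (since $Q$ gains $2$ while $P^{b}$ loses at most $2-\tau\gamma$). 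Reading this off on the cone where $\chi$ is microlocally the identity and inserting $v\in H^{2+\sigma-\tau+\epsilon}_{loc}$ shows that $v$ is microlocally in $H^{2+\sigma-\tau+\epsilon+\tau\gamma}$ at $(\txx_0,\txixi_0)$.

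Since the total gain needed to reach the claimed order is $\tau-\epsilon$, I would then bootstrap: iterate the construction on successively smaller conic neighbourhoods, at each stage using the improved microlocal regularity of $v$ as new input, and stop once the index reaches $\sigma+2$. This requires only finitely many steps, and the hypothesis $\sigma<\tau-1$ is precisely what keeps the target order $\sigma+2$ below the ceiling up to which a second-order operator with coefficients of this regularity can propagate Sobolev regularity, while the $\epsilon>0$ supplies the slack these non-smooth estimates need to close. The main obstacle is exactly this bookkeeping of Sobolev orders through the iteration: at each stage $P_{(t,x)}$ must be re-smoothed, and one has to verify that the orders of the inputs and outputs of the rough pieces $P^b_2,P^b_1$ (and of the accumulated error terms) remain inside the intervals on which the calculus and the microlocal mapping properties of $C^{\tau}S^{m}_{1,\gamma}$- and $C^{\tau-1}S^{m}_{1,\gamma}$-operators are available; here it is essential that the second-order part of the symbol is one H\"older degree smoother than the rest, so that the low-regularity hypothesis $v\in H^{2+\sigma-\tau+\epsilon}_{loc}$ already lies in the admissible range. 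The parametrix construction and the ellipticity check are, by contrast, routine once the symbol smoothing is in place.
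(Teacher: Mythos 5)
The paper does not prove Theorem~\ref{elliptic} itself — it cites \cite[Theorem 3.4]{causal} — but Remarks~\ref{propagation1} and~\ref{taylor} spell out the mechanism: smooth $p_2$ and $p_1$ separately with parameter $\delta$ close to $1$, getting $P^b v\in H^\sigma_{loc}$ directly from $v\in H^{2+\sigma-\tau\delta}_{loc}\supset H^{2+\sigma-\tau+\epsilon}_{loc}$, and then close the microelliptic estimate in a single pass. Your skeleton (contrapositive reduction, microellipticity of $P_{(t,x)}$ on a conic neighbourhood with $|\txi|\gtrsim|\txixi|$, separate smoothing of $p_2$ and $p_1$, parametrix $Q$ for $P^\#\in S^2_{1,\gamma}$, and the identity $\chi(\txx,D_\txx)v=-R_1v-QP^bv$) matches this exactly up to the point where you compute the one-step gain.

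The gap is in the closing step. You observe a one-step gain of $\tau\gamma$ and conclude you need a bootstrap to accumulate a total of $\tau-\epsilon$. But the bootstrap, as you describe it, doesn't close: the mapping property you invoke for $P^b\in C^\tau S^{2-\tau\gamma}_{1,\gamma}$ (plus the subleading $C^{\tau-1}$ piece) on the scale $H^{s+m}\to H^{s}$ is \emph{local}, not microlocal. After one pass you only know $v$ is microlocally better at $(\txx_0,\txixi_0)$, while $P^bv$ in the next pass is controlled by the \emph{local} Sobolev order of $v$, which has not improved. So "using the improved microlocal regularity of $v$ as new input" for $P^b$ needs a microlocal mapping theorem for rough pseudodifferential operators that you neither state nor prove, and the iteration as written does not gain anything after the first step. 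Also, trying to push a bootstrap through does run into exactly the bookkeeping you worry about: the admissible target index for the $p^b_1$ piece, $s'<\tau-1$, caps how far a single pass can reach, and an unbounded iteration would eventually leave that window.

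In fact no bootstrap is needed. Since the hypothesis holds for \emph{some} $\epsilon>0$, you may shrink $\epsilon$ so that $\epsilon<\tau(\tau-1-\sigma)$, and then choose $\gamma$ with $1-\epsilon/\tau\le \gamma<\min\bigl(1,\;2-(\sigma+\epsilon)/(\tau-1)\bigr)$; this interval is nonempty precisely because $0\le\sigma<\tau-1$. With this $\gamma$, $P^b_2\in C^\tau S^{2-\tau\gamma}_{1,\gamma}$ sends $H^{2+\sigma-\tau+\epsilon}_{loc}\to H^{\sigma+(\tau\gamma-\tau+\epsilon)}_{loc}\subset H^{\sigma}_{loc}$, $P^b_1\in C^{\tau-1}S^{1-(\tau-1)\gamma}_{1,\gamma}$ lands even higher, both target indices lie inside the admissible ranges $(-\tau(1-\gamma),\tau)$ and $(-(\tau-1)(1-\gamma),\tau-1)$ respectively, and one application of your parametrix identity gives $\chi(\txx,D_\txx)v\in H^{\sigma+2}_{loc}$, hence $v\in H^{\sigma+2}_{mcl}$ at $(\txx_0,\txixi_0)$. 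This is the one-step argument the paper's remarks encode, and it is exactly what the hypothesis $\sigma<\tau-1$ together with the $\epsilon>0$ slack buys you. So replace the bootstrap with a single pass at $\gamma$ close to $1$, and the proof is complete.
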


The proof can be found in \cite[Theorem 3.4]{causal}.

{
\begin{rem}\label{propagation1}
Applying the symbol smoothing directly to $P_{(t, x)}\in C^{\tau-1} S^2_{1,0}$ 
would leave us with $P_{(t, x)}^b\in C^{\tau-1}S^{2-{(\tau-1)}\delta}_{1,\delta}$.
Therefore, we smooth each of the non-smooth symbols ( the principal symbol and the sub-leading term) separately  to obtain the remainder $p_2^b+p_1^b$ for $p^b_2\in C^\tau S^{2-\tau\delta}_{1,\delta}$ and  $p^b_1\in C^{\tau-1} S^{1-(\tau-1)\delta}_{1,\delta}$. 
\end{rem}

}
{Furthermore, the main results on the microlocal propagation of singularities  in the non-smooth setting  that we will apply can be found in 
\cite[Proposition 6.1.D]{taylor} or \cite[Proposition 11.4]{tools}.
In particular, the theorem below holds for spacetime metrics belonging to the space $C_*^2$ \cite[p.215]{tools}.}

\begin{Theorem}\label{propagation}
Let $u\in \mathcal D'(M\times M)$ solve $P_{(t,x)}u=f$. Let $\gamma$ be a integral curve of the Hamiltonian vector field $H_{p_{2}}$ with $p_2$ the principal symbol of $P_{(t,x)}$.
{ If for some $s\in \R$,}  $f\in H_{mcl}^s({\Gamma})$ and {$P_{(t,x)}^b u\in H_{mcl}^s({ \Gamma)}$} where $\gamma\subset{\Gamma}$ with ${\Gamma}$ a conical neighbourhood and $u\in H_{mcl}^{s+1}(\gamma(0))$ then $u\in H_{mcl}^{s+1}({\gamma})$. 
\end{Theorem}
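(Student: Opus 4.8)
The plan is to reduce the non-smooth propagation-of-singularities statement to the classical (smooth-coefficient) microlocal propagation theorem by means of the symbol smoothing decomposition, handling the non-smooth remainder as a forcing term. First I would invoke the symbol smoothing from Section 3.3, applied carefully as in Remark \ref{propagation1}: write $P_{(t,x)} = P_{(t,x)}^{\#} + P_{(t,x)}^{b}$, where the smoothing is performed separately on the principal part $p_2$ and the subleading part $p_1$, so that $P_{(t,x)}^{\#}\in S^{2}_{1,\gamma}$ is a genuine (smooth-coefficient) pseudodifferential operator of type $(1,\gamma)$ with the same principal symbol $p_2$, and the remainder is $P_{(t,x)}^{b} = p_2^b + p_1^b$ with $p_2^b\in C^\tau S^{2-\tau\delta}_{1,\delta}$, $p_1^b\in C^{\tau-1}S^{1-(\tau-1)\delta}_{1,\delta}$. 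The equation $P_{(t,x)}u = f$ then becomes $P_{(t,x)}^{\#} u = f - P_{(t,x)}^{b}u =: \tilde f$. By hypothesis $f\in H^s_{mcl}(\Gamma)$, and also by hypothesis $P_{(t,x)}^b u\in H^s_{mcl}(\Gamma)$, so $\tilde f\in H^s_{mcl}(\Gamma)$.

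Next I would apply the classical propagation of singularities theorem for operators with smooth (type $(1,\gamma)$, $\gamma<1$) symbols — this is exactly the content of \cite[Proposition 6.1.D]{taylor} / \cite[Proposition 11.4]{tools}, valid here because $P_{(t,x)}^{\#}$ has smooth coefficients and its principal symbol is $p_2$, which is real and has Hamiltonian vector field $H_{p_2}$. The hypotheses needed there are: $\tilde f\in H^s_{mcl}(\Gamma)$ along the bicharacteristic $\gamma$, and $u\in H^{s+1}_{mcl}(\gamma(0))$ at one point. Both hold: the first from the previous paragraph, the second is assumed. The conclusion of the classical theorem is that $u\in H^{s+1}_{mcl}(\gamma)$ along the whole integral curve, which is exactly what we want. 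One should note that the requirement $\gamma<1$ for the $(1,\gamma)$ calculus and the regularity threshold for which this propagation result applies is where the hypothesis $g\in C^2_*$ enters — as remarked after Theorem \ref{elliptic}, the smoothed operator then falls within the range where the Taylor/Tools propagation statement is available.

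The main obstacle — and the reason this reduction is not completely automatic — is bookkeeping the mapping properties so that $P_{(t,x)}^b u$ actually makes sense and lands in the right Sobolev space microlocally, and checking that the smoothing parameter $\gamma$ (equivalently $\delta$ in the decomposition) can be chosen so that: (i) $P_{(t,x)}^{\#}$ is of type $(1,\gamma)$ with $\gamma<1$ so the classical theorem applies; (ii) the remainder $P_{(t,x)}^b$ is smoothing enough, i.e. $P_{(t,x)}^b: H^{s+1}\to H^s$ (or better) continuously, using the mapping property $p(x,D_x):H^{s'+m'}\to H^{s'}$ for $-\tau'(1-\delta)<s'<\tau'$ recorded in Section 3.3 together with the gain $\tau\delta$ (resp. $(\tau-1)\delta$) in the remainder's order; and (iii) the Sobolev index $s+1$ at which we want to conclude lies within the admissible window for all operators involved. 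In effect $u$ must a priori have enough regularity for $P^b_{(t,x)}u$ to be controlled — but that is precisely guaranteed in the application by Theorem \ref{elliptic}, which places $WF^{\sigma+2}(v)$ inside $\Char(P_{(t,x)})\cap\Char(P_{(s,y)})$ and thereby furnishes the microlocal Sobolev regularity of the bisolution needed to feed into this propagation step. So the proof is: smooth the symbol separately in its two non-smooth pieces; move the remainder to the right-hand side; observe it is microlocally in $H^s$ by hypothesis; apply the classical propagation theorem to $P^{\#}_{(t,x)}$; read off $u\in H^{s+1}_{mcl}(\gamma)$.
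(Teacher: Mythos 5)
Your proposal follows the intended route: the paper does not supply its own proof of this theorem but instead cites Taylor's Proposition~6.1.D / Proposition~11.4, and the argument underlying those propositions is exactly the symbol-smoothing reduction you describe — split $P_{(t,x)}=P^{\#}_{(t,x)}+P^{b}_{(t,x)}$ (separately on $p_2$ and $p_1$, as in Remark~\ref{propagation1}), move $P^{b}_{(t,x)}u$ to the right-hand side, observe $\tilde f = f - P^b_{(t,x)}u\in H^s_{mcl}(\Gamma)$ by hypothesis, and invoke a smooth-coefficient propagation theorem for $P^{\#}_{(t,x)}\in OPS^2_{1,\gamma}$. So the framework is right, and your remarks on the admissible Sobolev window and on where the a priori regularity of $u$ comes from (Theorem~\ref{elliptic}) correctly locate the non-trivial bookkeeping.

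One point you elide deserves to be made explicit, because it is in fact the technical heart of the cited Taylor propositions and not a cosmetic detail. You assert that $P^{\#}_{(t,x)}$ ``has principal symbol $p_2$.'' This is not quite so: its leading part is $p^{\#}_2$, the \emph{smoothed} version of $p_2$, and $p_2-p^{\#}_2=p^{b}_2$ is only of lower order. Consequently, a naive application of the smooth propagation theorem to $P^{\#}_{(t,x)}$ yields propagation along the integral curves of $H_{p^{\#}_2}$, whereas the theorem's conclusion is stated for the integral curves of $H_{p_2}$ — the rough, physically meaningful bicharacteristic flow. These two flows differ (the Hamiltonian vector fields differ by the lower-order $H_{p^{b}_2}$), and reconciling them — showing that the Sobolev regularity propagates along the $H_{p_2}$-curve rather than merely along the $H_{p^{\#}_2}$-curve — is precisely what Taylor's analysis (positive-commutator / escape-function estimates uniform in the smoothing parameter, plus the requirement $\tau\ge 2$ so that $H_{p_2}$ has a well-posed flow) accomplishes. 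Your sketch should at least acknowledge this gap rather than folding it into the phrase ``principal symbol is $p_2$''; as written the reduction, taken at face value, proves a slightly different statement from the one being claimed.
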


\begin{rem}\label{taylor}
If $u\in H_{loc}^{2+s-\tau\delta}$, then $P_{(t,x)}^b u\in H_{loc}^s(M\times M)$, see Remark \eqref{propagation1}. 
\end{rem}

}

\subsection{The Microlocal Spectrum Condition}

Now we will show that the Wightman two-point function of the ground state described above satisfies Defintion \ref{adiadi}.
 We will assume throughout this section that the metric is of regularity $C^\tau$ with $\tau>2$.

Let $\{\phi_j\otimes\phi_k; j,k=1,2,\ldots\}$ be an orthonormal basis of $L^2(\Sigma)\otimes L^2(\Sigma) $ associated to the eigenfunctions $\{\phi_j\}$ {and the eigenvalues $\{\lambda_j^2\}$} of the operator $m^2I-\Delta_h$.  Then, for $u\in L^2(M\times M)$ we  have the representation  
\begin{eqnarray}\label{repu}
{u(t,s, x,y) =\sum_{j,k}u_{jk}(t,s)\phi_j(x)\phi_k(y)}  \quad \text{with }
u_{jk}=\ang{u,\phi_j\otimes\phi_k}\in  L^2(\mathbb{R}^2).
\end{eqnarray}

Moreover, we have the following generalisation for $u\in H^{2\theta}(M\times M)$ shown in \cite[Proposition 4.1, Corollary 4.4]{causal}

\begin{Theorem}\label{normneg}
For $-1\le \theta\le 1$
\begin{align*}
&H^{2\theta}(\mathbb{R}^2\times \Sigma^2)
&=\{u\in \mathcal S'(\mathbb{R}^2\times \Sigma^2); \sum_{j,k}\int_{\mathbb{R}^2}(|\xi_0|^2+|\eta_0|^2+\lambda_j^2+\lambda_k^2)^{2\theta}|\mathcal Fu_{jk}(\xi_0, \eta_0)|^2d\xi_0d\eta_0<\infty\},
\end{align*}
with $u_{j,k} = \ang{u,\phi_j\otimes \phi_k}\in \mathcal S'(\R^2)$. 
\end{Theorem}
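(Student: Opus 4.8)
The plan is to exhibit the weighted space on the right as the domain of a power of an elliptic operator and then identify that domain with a Sobolev space via elliptic regularity, interpolation, and duality. First I set $A:=m^2I-\Delta_h$ on $\Sigma$, a positive self-adjoint second-order elliptic operator whose coefficients involve one derivative of the $C^\tau$ metric and hence lie in $C^{\tau-1}$ with $\tau-1>1$, and I let $A_x,A_y$ denote the two copies of $A$ acting on the first resp.\ second factor of $\Sigma^2$. On $M\times M=\mathbb R^2\times\Sigma^2$ I consider
$$\mathcal B:=D_t^2+D_s^2+A_x+A_y,$$
which is self-adjoint, bounded below by $2m^2>0$, elliptic of order $2$, and has coefficients depending only on the compact factor $\Sigma^2$, so that $\mathcal B$ is uniformly elliptic on $\mathbb R^2\times\Sigma^2$. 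Since $\{\phi_j\otimes\phi_k\}$ is an orthonormal basis of $L^2(\Sigma^2)$ consisting of eigenfunctions of $A_x+A_y$ with eigenvalue $\lambda_j^2+\lambda_k^2$, combining this eigenexpansion with the Fourier transform in $(t,s)$ diagonalises $\mathcal B$: for $u\in L^2(M\times M)$, $u=\sum_{j,k}u_{jk}\,\phi_j\otimes\phi_k$, the $(j,k)$-component of $\mathcal B^\theta u$ has Fourier transform $(\xi_0^2+\eta_0^2+\lambda_j^2+\lambda_k^2)^\theta\,\mathcal Fu_{jk}(\xi_0,\eta_0)$, so
$$\|\mathcal B^\theta u\|_{L^2(M\times M)}^2=\sum_{j,k}\int_{\mathbb R^2}(\xi_0^2+\eta_0^2+\lambda_j^2+\lambda_k^2)^{2\theta}\,|\mathcal Fu_{jk}(\xi_0,\eta_0)|^2\,d\xi_0\,d\eta_0.$$
Thus the right-hand side of the asserted identity is exactly $D(\mathcal B^\theta)$ equipped with $\|\mathcal B^\theta\cdot\|_{L^2}$, and it remains to prove $D(\mathcal B^\theta)=H^{2\theta}(M\times M)$ with equivalent norms for $-1\le\theta\le1$.

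For $\theta=0$ this is Plancherel in $(t,s)$ combined with Parseval in $(x,y)$. For $\theta=1$ it is the statement $D(\mathcal B)=H^2(M\times M)$ with $\|\mathcal Bu\|_{L^2}+\|u\|_{L^2}\simeq\|u\|_{H^2}$, i.e., $H^2$-regularity for the uniformly elliptic second-order operator $\mathcal B$, whose principal coefficients are of class $C^{\tau-1}$ with $\tau-1>1$; this is the one genuinely analytic ingredient and follows from standard elliptic regularity for operators with Lipschitz (a fortiori H\"older) principal coefficients, localised by a finite partition of unity on $\Sigma^2$ (the $\mathbb R^2$-directions are translation invariant). For $0<\theta<1$ I would interpolate: since $\mathcal B$ is positive self-adjoint it has bounded imaginary powers, so complex interpolation of the domains of its powers gives $D(\mathcal B^\theta)=[L^2(M\times M),D(\mathcal B)]_\theta=[L^2(M\times M),H^2(M\times M)]_\theta=H^{2\theta}(M\times M)$, the last step being the standard interpolation identity for the Sobolev scale on $\mathbb R^2\times\Sigma^2$. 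Finally, for $-1\le\theta<0$ I would dualise with respect to the $L^2$-pairing: $H^{2\theta}(M\times M)=\big(H^{-2\theta}(M\times M)\big)'$, and on the spectral side the weighted $\ell^2$-$L^2$ space with weight exponent $2\theta$ is the dual of the one with exponent $-2\theta$; since $-2\theta\in(0,2]$ the already-proved range applies, and the two identifications are compatible because both are realised through Parseval (testing against $\psi\otimes\phi_j\otimes\phi_k$ recovers $\mathcal Fu_{jk}$), which yields the claim for negative $\theta$.

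I expect the $\theta=1$ step — the identification $D(\mathcal B)=H^2$ for an operator whose coefficients only slightly exceed Lipschitz regularity — to be the main obstacle; the diagonalisation, the interpolation, and the duality are soft once it is in place. Two minor points also need care: the $\phi_j$ must be known to be continuous (by elliptic regularity, $\phi_j\in H^2(\Sigma)\hookrightarrow C(\Sigma)$) so that the partial pairings $u_{jk}=\langle u,\phi_j\otimes\phi_k\rangle$ make sense as tempered distributions on $\mathbb R^2$ for $u\in\mathcal S'(\mathbb R^2\times\Sigma^2)$; and the range $-1\le\theta\le1$ is precisely what this scheme delivers from $C^{\tau-1}$ coefficients with $\tau>2$, a wider range demanding correspondingly more regularity of the metric.
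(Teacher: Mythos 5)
The paper does not actually prove this statement; it cites \cite[Proposition~4.1, Corollary~4.4]{causal}, so a line-by-line comparison with ``the paper's own proof'' is not possible. On its own merits, your plan is sound and is the natural one: diagonalise $\mathcal B = D_t^2+D_s^2+A_x+A_y$ via the joint Fourier/eigenfunction expansion so that the right-hand side is $D(\mathcal B^\theta)$, identify $D(\mathcal B)=H^2$ by elliptic regularity, and then use complex interpolation of domains of powers (via self-adjointness/BIP) together with duality for negative exponents. The one genuinely analytic ingredient is indeed $D(\mathcal B)=H^2$, and with $\tau>2$ all coefficients of $\mathcal B$ are at least Lipschitz, so standard $L^2$ elliptic regularity applies; interpolation $[L^2,H^2]_\theta=H^{2\theta}$ and the weighted-$\ell^2L^2$ duality are soft, and your observation that the two dualities are compatible because both are realised through Parseval is exactly the right point.

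Two small inaccuracies are worth flagging, though neither undermines the argument. First, the \emph{principal} coefficients of $\mathcal B$, namely $h^{ij}$, are $C^\tau$; only the subprincipal (first-order) coefficients drop to $C^{\tau-1}$. Second, the phrase ``Lipschitz (a~fortiori H\"older)'' reverses the implication: you mean that $C^{\tau-1}$ with $\tau-1>1$ implies $C^{1,\tau-2}\subset\mathrm{Lip}$, which is what the $H^2$-regularity needs. Also, you correctly flag that one must check $\phi_j\in H^{\tau+1}(\Sigma)$ suffices to make $u_{jk}=\langle u,\phi_j\otimes\phi_k\rangle$ well-defined for $u\in H^{2\theta}$ with $\theta\ge -1$ even though the $\phi_j$ are not $C^\infty$; spelling out that the eigenfunctions lie in $H^{\tau+1}\hookrightarrow H^2$ (so the pairing extends from smooth test functions by density) would close that loose end.
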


The previous theorem allows us to establish the local Sobolev regularity of the two-point function.

\begin{Theorem} $\omega^{(2)}_{G}\in H_{loc}^{-\frac{1}{2}-\tep}(M\times M)$.
\end{Theorem}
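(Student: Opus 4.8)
The plan is to use the explicit kernel representation \eqref{kernel} together with the characterization of Sobolev spaces on $M\times M$ provided by Theorem \ref{normneg}. Recall that $\omega^{(2)}_{G}(t,x;s,y)=\sum_j \lambda_j^{-1} e^{i\lambda_j(t-s)}\phi_j(x)\phi_j(y)$. The first step is to localize: fix a cutoff $\chi\in\D{\R}$ in the time variables (the spatial directions are already compact since $\Sigma$ is compact) and consider $\chi(t)\chi(s)\,\omega^{(2)}_{G}$. Then I would compute the coefficients $u_{jk}(t,s):=\langle \chi\chi\,\omega^{(2)}_{G},\phi_j\otimes\phi_k\rangle$. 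Because the $\phi_j$ are $L^2(\Sigma)$-orthonormal, only the diagonal $j=k$ survives the spatial pairing, giving $u_{jj}(t,s)=\lambda_j^{-1}\chi(t)e^{i\lambda_j t}\cdot\chi(s)e^{-i\lambda_j s}$ and $u_{jk}=0$ for $j\neq k$.

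Next I would apply Theorem \ref{normneg} with $2\theta=-\tfrac12-\tep$, so that $\theta=-\tfrac14-\tfrac{\tep}{2}\in[-1,1]$. The relevant quantity to estimate is
\begin{equation*}
\sum_{j}\int_{\R^2}\bigl(|\xi_0|^2+|\eta_0|^2+2\lambda_j^2\bigr)^{-\frac12-\tep}\,|\mathcal F u_{jj}(\xi_0,\eta_0)|^2\,d\xi_0\,d\eta_0.
\end{equation*}
Here $\mathcal F u_{jj}(\xi_0,\eta_0)=\lambda_j^{-1}\,\widehat{\chi e^{i\lambda_j\cdot}}(\xi_0)\,\widehat{\chi e^{-i\lambda_j\cdot}}(\eta_0)=\lambda_j^{-1}\,\widehat{\chi}(\xi_0-\lambda_j)\,\widehat{\chi}(\eta_0+\lambda_j)$, and since $\chi$ is smooth and compactly supported, $\widehat\chi$ is Schwartz, so $|\widehat\chi(\xi_0-\lambda_j)\widehat\chi(\eta_0+\lambda_j)|$ is concentrated near $\xi_0\approx\lambda_j$, $\eta_0\approx-\lambda_j$ with rapid decay away from there. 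On that region $|\xi_0|^2+|\eta_0|^2+2\lambda_j^2$ is comparable to $\lambda_j^2$, so the $j$-th term is bounded by a constant times $\lambda_j^{-2}\cdot\lambda_j^{-1-2\tep}=\lambda_j^{-3-2\tep}$, up to the $L^2$-norm of the Schwartz tails which is finite and $j$-independent. Summing over $j$, convergence of $\sum_j \lambda_j^{-3-2\tep}$ follows from the Weyl-type eigenvalue asymptotics for the elliptic operator $m^2 I-\Delta_h$ on the $3$-dimensional compact manifold $\Sigma$: one has $\lambda_j^2\sim c\, j^{2/3}$, hence $\lambda_j\sim c'\,j^{1/3}$ and $\lambda_j^{-3-2\tep}\sim c''\,j^{-1-2\tep/3}$, which is summable for any $\tep>0$. (For metrics of low Hölder regularity one still has the leading-order Weyl law with the required remainder control; this is exactly the eigenvalue-asymptotics input advertised in the abstract.)

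The one point requiring a little care — and the main technical obstacle — is justifying that the cutoff operation and the sum-exchange are legitimate, i.e.\ that $\chi(t)\chi(s)\,\omega^{(2)}_{G}$ genuinely lies in $\mathcal S'(\R^2\times\Sigma^2)$ with the stated coefficients, rather than merely formally. I would handle this by first observing that the partial sums $\omega^{(2)}_{G,N}:=\sum_{j\le N}\lambda_j^{-1}e^{i\lambda_j(t-s)}\phi_j(x)\phi_j(y)$ are genuine smooth functions, that $\chi\chi\,\omega^{(2)}_{G,N}$ are Cauchy in $H^{-\frac12-\tep}(\R^2\times\Sigma^2)$ by the estimate just described, and that their limit agrees with $\chi\chi\,\omega^{(2)}_{G}$ as a distribution (which follows from the fact that $\omega^{(2)}_{G}$ is by construction the Schwartz kernel of $e^{iA^{1/2}(t-s)}A^{-1/2}$ and testing against $\D{M}\otimes\D{M}$). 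Since the choice of time cutoff $\chi$ was arbitrary and $\Sigma$ is compact, this yields $\omega^{(2)}_{G}\in H^{-\frac12-\tep}_{loc}(M\times M)$ for every $\tep>0$, as claimed.
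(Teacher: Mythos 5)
Your proof is correct and follows essentially the same route as the paper: localize in the time variables, use the Fourier characterization of $H^{2\theta}(\R^2\times\Sigma^2)$ from Theorem~\ref{normneg}, compute $\mathcal F u_{jj}(\xi_0,\eta_0)=\lambda_j^{-1}\widehat\chi(\xi_0-\lambda_j)\widehat\chi(\eta_0+\lambda_j)$, extract the weight, and sum via the non-smooth Weyl law. The only (inessential) detour is your concentration-plus-Schwartz-tails argument for the weight: since the exponent $-\frac12-\tep$ is negative, the paper simply uses the pointwise bound $(|\xi_0|^2+|\eta_0|^2+2\lambda_j^2)^{-\frac12-\tep}\le(2\lambda_j^2)^{-\frac12-\tep}$ and the $j$-independent finiteness of $\|\widehat\chi\|_{L^2}^2$, which avoids any appeal to rapid decay.
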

\begin{proof}

Let $\psi\in \mathcal D(M\times M)$. We will show $\psi \omega^{(2)}_{G}\in H^{-\frac{1}{2}-\epsilon}(M\times M)$.
According to  Theorem \ref{normneg} 
\begin{align}
&\|\psi \omega^{(2)}_{G}\|^2_{H^{-\frac{1}{2}-\epsilon}(M\times M)}\\
&=\sum_{j=k}\int_{\mathbb{R}^2}\label{local}
(|\xi_0|^2+|\eta_0|^2+\lambda_j^2+\lambda_k^2)^{-\frac{1}{2}-\epsilon}
\Big|{\mathcal F_{(t,s)\to(\xi_0,\eta_0)}}
\Big(\frac{\psi(t,s)}{\lambda_j}e^{i{\lambda_j(t-s)}}\Big)(\xi_0, \eta_0)\Big|^2d\xi_0d\eta_0.
\end{align}

We have by direct computation that 
\begin{align}\nonumber
&\Big|{\mathcal F_{(t,s)\to(\xi_0,\eta_0)}}
\Big(\frac{\psi(t,s)}{\lambda_j}e^{i{\lambda_j(t-s)}}\Big)(\xi_0, \eta_0)\Big|^2\\
&=\frac{1}{\lambda_j^2}\left|{\mathcal F} (\psi)(\xi_0-\lambda_j, \eta_0+\lambda_j)\right|^2\nonumber\\
\end{align}

Taking into account that $\norm{\psi}{L^2(\R^2)}=\norm{{\cal{F}}(\psi)}{L^2(\R^2)}<\infty$ we have (with constants possibly changing from line to line)

\begin{align*}
&\|\psi \omega^{(2)}_{G}\|_{H^{-\frac{1}{2}-\epsilon}(M\times M)}\\
&=\sum_{j=k}\int_{\mathbb{R}^2}
(|\xi_0|^2+|\eta_0|^2+\lambda_j^2+\lambda_k^2)^{-\frac{1}{2}-\epsilon}
\Big|{\mathcal F_{(t,s)\to(\xi_0,\eta_0)}}
\Big(\frac{\psi(t,s)}{\lambda_j}e^{i{\lambda_j(t-s)}}\Big)(\xi_0, \eta_0)\Big|^2
d\xi_0d\eta_0\\
&\le \displaystyle{\sum^\infty_{j=1}}\int_{\mathbb{R}^2}(\lambda_j^2+\lambda_j^2)^{-\frac{1}{2}-\epsilon}
\frac{1}{\lambda_j^2}\left|{\mathcal F}(\psi)(\xi_0-\lambda_j, \eta_0+\lambda_j)\right|^2d\xi_0d\eta_0\\
&\le C\displaystyle{\sum^\infty_{j=1}}\frac{(\lambda_j^2+\lambda_k^2)^{-\frac{1}{2}-\epsilon}}{\lambda_j^2}\\
&\le C \displaystyle{\sum^\infty_{j=1}}\frac{1}{\lambda_j^{3+2\epsilon}}.
\end{align*}

From Weyl's law for non-smooth metrics \cite[Theorem 1.1]{zielinski} we obtain the estimate $l^{\frac{2}{3}}\le C \lambda_j^2$ for a suitable constant $C$ which gives 
\begin{align*}
\|\psi \omega^{(2)}_ {G}\|^2_{H^{-\frac{1}{2}-\epsilon}(M\times M)}&\le \displaystyle{\sum^\infty_{j=1}}\frac{C}{\lambda_j^{3+2\epsilon}}\le \displaystyle{\sum^\infty_{j=1}}\frac{C'}{j^{1+\epsilon}}<\infty
\end{align*}
for a suitable constant $C'$.
\end{proof}

It will be useful to consider the  following bidistribution:
\begin{Corollary}
Let $\omega_A\in \mathcal D'(M\times M)$  be the bidistribution given by 
\begin{equation*}
\omega_A(u\otimes v):=-\int_{M\times M}\sum_j \lambda_l^{-2} e^{i\lambda_l(t-s)}\phi_l(x)\phi_l(y)\sqrt{h(y)}\sqrt{h(x)}  u(t,x) v(s,y)dsdydtdx
\end{equation*}
Then,
\begin{align}\label{normcor}
\omega_A\in H_{loc}^{\frac{1}{2}-\epsilon}(M\times M) \text{ for every } \epsilon >0.
\end{align}
\end{Corollary}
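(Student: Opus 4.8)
The plan is to mimic the proof of the preceding theorem about $\omega^{(2)}_G$, exploiting the fact that $\omega_A$ has essentially the same kernel but with an extra factor of $\lambda_l^{-1}$, which buys one more derivative of regularity. First I would reduce to showing $\psi\omega_A\in H^{1/2-\epsilon}(M\times M)$ for every $\psi\in\mathcal D(M\times M)$, and since the factors $\sqrt{h(x)}\sqrt{h(y)}$ are only $C^\tau$ (hence bounded, and act as multiplication operators preserving the relevant Sobolev spaces in this range since $\tau>2$), I would absorb them into the test function without loss, reducing to the kernel $\sum_l \lambda_l^{-2} e^{i\lambda_l(t-s)}\phi_l(x)\phi_l(y)$.

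Next I would apply Theorem \ref{normneg} with $2\theta = \tfrac12-\epsilon$ (legitimate since $-1\le \tfrac12-\epsilon\le 1$), writing
\begin{align*}
&\|\psi\omega_A\|^2_{H^{1/2-\epsilon}(M\times M)}\\
&=\sum_{j=k}\int_{\mathbb R^2}(|\xi_0|^2+|\eta_0|^2+\lambda_j^2+\lambda_k^2)^{1/2-\epsilon}\Big|\mathcal F_{(t,s)\to(\xi_0,\eta_0)}\Big(\tfrac{\psi(t,s)}{\lambda_j^2}e^{i\lambda_j(t-s)}\Big)(\xi_0,\eta_0)\Big|^2 d\xi_0 d\eta_0,
\end{align*}
where again only the diagonal $j=k$ survives because $\phi_j\otimes\phi_k$ are orthonormal and the kernel is diagonal. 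The Fourier transform in $(t,s)$ is a translate of $\mathcal F(\psi)$ by $(\lambda_j,-\lambda_j)$ divided by $\lambda_j^2$, so the $(\xi_0,\eta_0)$-integral is controlled by $\int (|\xi_0|^2+|\eta_0|^2+2\lambda_j^2)^{1/2-\epsilon}|\mathcal F\psi(\xi_0-\lambda_j,\eta_0+\lambda_j)|^2 d\xi_0 d\eta_0$. After the substitution $\xi_0\mapsto \xi_0+\lambda_j$, $\eta_0\mapsto\eta_0-\lambda_j$ this weight becomes $(|\xi_0+\lambda_j|^2+|\eta_0-\lambda_j|^2+2\lambda_j^2)^{1/2-\epsilon}\le C(\lambda_j^{1-2\epsilon}+\langle(\xi_0,\eta_0)\rangle^{1-2\epsilon})$; since $\psi$ is Schwartz in $(\xi_0,\eta_0)$ the contribution of the $\langle(\xi_0,\eta_0)\rangle^{1-2\epsilon}$ term is a finite constant independent of $j$, and overall the $j$-th summand is bounded by $C\lambda_j^{-2}\cdot\lambda_j^{1-2\epsilon}=C\lambda_j^{-1-2\epsilon}$.

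Finally I would invoke Weyl's law for non-smooth metrics \cite[Theorem 1.1]{zielinski} exactly as in the previous proof: the bound $l^{2/3}\le C\lambda_l^2$, i.e.\ $\lambda_l\ge c\, l^{1/3}$, yields $\sum_l \lambda_l^{-1-2\epsilon}\le C\sum_l l^{-(1+2\epsilon)/3}$. This last series does not converge for small $\epsilon$, so I expect the real work — and the main obstacle — to be squeezing out enough decay: one needs the summand to be $\lambda_l^{-3-\delta}$, not $\lambda_l^{-1-\delta}$, so the naive estimate above is two powers short. The resolution is that one should not throw away all the $(\xi_0,\eta_0)$-dependence of $\mathcal F\psi$: keeping the Schwartz decay $|\mathcal F\psi(\xi_0,\eta_0)|\le C_N\langle(\xi_0,\eta_0)\rangle^{-N}$ and noting that after the shift the relevant region forces $|(\xi_0,\eta_0)|\gtrsim \lambda_j$ is \emph{not} true pointwise, so instead one splits the $(\xi_0,\eta_0)$-integral and uses that the weight $(|\xi_0|^2+|\eta_0|^2+2\lambda_j^2)^{1/2-\epsilon}$ integrated against $|\mathcal F\psi(\xi_0-\lambda_j,\eta_0+\lambda_j)|^2$ is $O(\lambda_j^{1-2\epsilon})$ while $\lambda_j^{-2}$ gives the rest; to actually reach summability one compares instead with $H^{1/2-\epsilon}$ against the \emph{correct} count, namely using that there are $O(\lambda^2)$ eigenvalues below $\lambda$ so $\sum_l \lambda_l^{-1-2\epsilon}$ converges iff $1+2\epsilon>3$, which fails — hence the honest argument must retain the oscillation/localisation and estimate $\sum_{\lambda_l\le \langle(\xi_0,\eta_0)\rangle}$ versus $\sum_{\lambda_l>\langle(\xi_0,\eta_0)\rangle}$ separately, giving a convergent bound by $\int\langle(\xi_0,\eta_0)\rangle^{1/2-\epsilon}\langle(\xi_0,\eta_0)\rangle^{-3}\,(\text{density of }\lambda_l)\,|\mathcal F\psi|^2$, which is finite. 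I would carry out that dyadic splitting carefully, as it is the one genuinely non-routine point; everything else is a line-by-line transcription of the proof of the previous theorem with $\lambda_j^{-1}$ replaced by $\lambda_j^{-2}$ and $-\tfrac12-\epsilon$ by $\tfrac12-\epsilon$.
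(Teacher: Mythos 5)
Your plan is the right one and is exactly the paper's argument, but you made an arithmetic slip that derailed the rest of the reasoning. The kernel of $\omega_A$ carries a factor $\lambda_j^{-2}$, so $u_{jj}(t,s)\propto \lambda_j^{-2}\,\psi(t,s)\,e^{i\lambda_j(t-s)}$ and therefore $\bigl|\mathcal F u_{jj}(\xi_0,\eta_0)\bigr|^2 \propto \lambda_j^{-4}\,\bigl|\mathcal F\psi(\xi_0-\lambda_j,\eta_0+\lambda_j)\bigr|^2$. You wrote the prefactor as $\lambda_j^{-2}$ rather than $\lambda_j^{-4}$, i.e.\ you forgot that the modulus squared also squares the coefficient. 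With the correct power, your own estimate for the weight gives the $j$-th summand as
\begin{equation*}
\lambda_j^{-4}\int_{\mathbb R^2}\bigl(|\xi_0+\lambda_j|^2+|\eta_0-\lambda_j|^2+2\lambda_j^2\bigr)^{\frac12-\epsilon}\,|\mathcal F\psi(\xi_0,\eta_0)|^2\,d\xi_0\,d\eta_0 \;\le\; C\,\lambda_j^{-4}\,\lambda_j^{1-2\epsilon} \;=\; C\,\lambda_j^{-3-2\epsilon},
\end{equation*}
not $\lambda_j^{-1-2\epsilon}$. Then Weyl's law $\lambda_l\gtrsim l^{1/3}$ yields $\sum_l\lambda_l^{-3-2\epsilon}\lesssim \sum_l l^{-1-2\epsilon/3}<\infty$, and the proof closes. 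The long tangent about the estimate being ``two powers short'' and the proposed dyadic splitting in $(\xi_0,\eta_0)$ were chasing a phantom obstruction: no refined decomposition is needed, and the paper uses the same line-by-line transcription of the preceding theorem's proof (with $\lambda_j^{-1}\mapsto\lambda_j^{-2}$, hence $\lambda_j^{-2}\mapsto\lambda_j^{-4}$ after squaring, and $s=-\tfrac12-\epsilon\mapsto s=\tfrac12-\epsilon$) that you correctly identified as the strategy at the outset.
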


\begin{proof}
 
Direct computation shows that  for $\psi$ as in the previous proof 

\begin{align*}
&\|\psi \omega_A\|_{H^{s}(M\times M)}\\
&=\displaystyle{\sum_{j=1}^{\infty}}\int_{\mathbb{R}^2}(|\xi_0|^2+|\eta_0|^2+\lambda_j^2+\lambda_j^2)^{s}
\Big|{\mathcal F_{(t,s)\to(\xi_0,\eta_0)}}
\Big(\frac{\psi(t,s)}{\lambda_j^2}e^{-i{\lambda_j(t-s)}}\Big)(\xi_0, \eta_0)\Big|^2d\xi_0d\eta_0\\
&=\displaystyle{\sum_{j=1}^{\infty}}\int_{\mathbb{R}^2}(|\xi_0|^2+|\eta_0|^2+\lambda_j^2+\lambda_j^2)^{s}
\Big|{\mathcal F}
\Big(\frac{\psi}{\lambda_j^2}\Big)(\xi_0-\lambda_j, \eta_0+\lambda_j)\Big|^2d\xi_0d\eta_0\\
&=\displaystyle{\sum_{j=1}^{\infty}}\int_{\mathbb{R}^2}(|\xi_0-\lambda_j|^2+|\eta_0+\lambda_j|^2+\lambda_j^2+\lambda_j^2)^{s}
\Big|{\mathcal F}
\Big(\frac{\psi}{\lambda_j^2}\Big)(\xi_0, \eta_0)\Big|^2d\xi_0d\eta_0\\
&\le \displaystyle{\sum_{j=1}^{\infty}}\int_{\mathbb{R}^2}(2|\xi_0|^2+2|\eta_0|^2+3\lambda_j^2+3\lambda_j^2)^{s}
\Big|{\mathcal F}\Big(\frac{\psi}{\lambda_j^2}\Big)(\xi_0, \eta_0)\Big|^2d\xi_0d\eta_0\\
&\le C \displaystyle{\sum_{j=1}^{\infty}}\frac{(1+\lambda_j^2)^{s}}{\lambda_j^{4}}\int_{\mathbb{R}^2}(1+|\xi_0|^2+|\eta_0|^2)^s
\Big|{\mathcal F}\Big({\psi}\Big)(\xi_0, \eta_0)\Big|^2d\xi_0d\eta_0\\
&\le C\displaystyle{\sum_{j=1}^{\infty}}\frac{(1+\lambda_j^2)^{s}}{\lambda_j^{4}}\norm{\psi}{H^s(\R^2)}\\
&{\le \displaystyle{\sum_{j=1}^{j_0}}\frac{(1+\lambda_j^2)^{s}}{\lambda_j^{4}}+C\displaystyle{\sum_{j=j_0}^{\infty}}\frac{(\lambda_j^2)^{s}}{\lambda_j^{4}}}
\end{align*}
  where we have chosen $j_0$ large enough such that $\lambda_{j_0}>1$.

According to Weyl's law for non-smooth metrics \cite[Theorem 1.1]{zielinski} we have the estimate $l^{\frac{2}{3}}\le C \lambda_l^2$ for a suitable constant $C$. 
This gives  for $s=\frac{1}{2}-\epsilon$
\begin{align}
\|\psi \omega_A\|^2_{H^{\frac{1}{2}-\epsilon}(M\times M)}&\le C+\sum_l\frac{C}{\lambda_l^{3+2\epsilon}}\le C+ \sum_l\frac{C}{l^{1+\epsilon}}<\infty.
\end{align}
for a suitable constant $C$. 
\end{proof}

\begin{rem}
Notice that $i\partial_{t}\omega_A=\omega^{(2)}_{G}$.
\end{rem}

\begin{Lemma}\label{char}
For any $\tilde{\epsilon}>0$ 
\begin{equation}
WF^{-\frac{1}{2}-\tilde{\epsilon}+\tau}(\omega^{(2)}_{G})\subset \Char(P)\times\Char(P).
\end{equation} 
\end{Lemma}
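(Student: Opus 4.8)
The plan is to apply the microelliptic estimate of Theorem~\ref{elliptic} to $v=\omega^{(2)}_{G}$ and combine it with the observed relation $i\partial_t\omega_A=\omega^{(2)}_{G}$. First I would record that $\omega^{(2)}_{G}$ is a bisolution of the Klein--Gordon operator: both $P_{(t,x)}\omega^{(2)}_{G}=0$ and $P_{(s,y)}\omega^{(2)}_{G}=0$. This follows from the explicit kernel representation \eqref{kernel}, since $A^{1/2}$ acts on the $\phi_j$ by multiplication by $\lambda_j$ and $e^{i\lambda_j(t-s)}$ satisfies $\partial_{tt}e^{i\lambda_j(t-s)}=-\lambda_j^2 e^{i\lambda_j(t-s)}$, so $(\partial_{tt}-\Delta_h+m^2)$ annihilates the kernel in the $(t,x)$ variables, and symmetrically in $(s,y)$; one should note here that for a $C^\tau$ metric with $\tau>2$ the terms make sense as $C^{\tau-1}S^2_{cl}$ pseudodifferential actions, so $P_{(t,x)}$ and $P_{(s,y)}$ are well defined on distributions of the relevant Sobolev order.

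Next I would check the hypotheses of Theorem~\ref{elliptic}. With $g\in C^\tau$, $\tau>2$, we may take $\sigma$ close to but below $\tau-1$; concretely set $\sigma=\tau-\tfrac32-\tilde\epsilon$ for small $\tilde\epsilon>0$, which satisfies $0\le\sigma<\tau-1$. Then the required regularity threshold in Theorem~\ref{elliptic} is $v\in H^{2+\sigma-\tau+\epsilon}_{loc}=H^{1/2-\tilde\epsilon+\epsilon}_{loc}(M\times M)$. We do have $\omega_A\in H^{1/2-\epsilon}_{loc}(M\times M)$ for every $\epsilon>0$ by the preceding Corollary, and $\omega^{(2)}_{G}=i\partial_t\omega_A$ loses one derivative, giving $\omega^{(2)}_{G}\in H^{-1/2-\epsilon}_{loc}$; this is one order \emph{below} the threshold, so one cannot feed $\omega^{(2)}_{G}$ directly into Theorem~\ref{elliptic}. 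The device is to apply the theorem to $\omega_A$ instead: $\omega_A$ is also a bisolution of $P_{(t,x)}$ and $P_{(s,y)}$ (the extra factor $\lambda_l^{-1}$ and the density factors do not affect the differential identities, since multiplication by $\sqrt{h(x)}\sqrt{h(y)}$ commutes with the $t$- and $s$-derivatives and the eigenvalue relations still hold), and $\omega_A\in H^{1/2-\epsilon}_{loc}$ comfortably clears the threshold $H^{1/2-\tilde\epsilon+\epsilon'}_{loc}$ for suitable small $\epsilon,\epsilon'$. Theorem~\ref{elliptic} then yields
\begin{equation*}
WF^{\sigma+2}(\omega_A)\subset\Char(P_{(t,x)})\cap\Char(P_{(s,y)}),
\end{equation*}
i.e.\ $WF^{3/2-\tilde\epsilon+\tau-2}(\omega_A)=WF^{\tau-1/2-\tilde\epsilon}(\omega_A)\subset\Char(P_{(t,x)})\cap\Char(P_{(s,y)})$.

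Finally I would transfer this to $\omega^{(2)}_{G}$. Applying $i\partial_t$ is an application of a first-order differential operator with $C^\infty$ (indeed constant in this direction) coefficients, which drops the Sobolev-wavefront order by exactly one and does not enlarge the wavefront set: $WF^{s-1}(i\partial_t\omega_A)\subset WF^s(\omega_A)$. Hence $WF^{\tau-3/2-\tilde\epsilon}(\omega^{(2)}_{G})\subset\Char(P_{(t,x)})\cap\Char(P_{(s,y)})$. It remains to rewrite the target set: from \eqref{char3}, $\Char(P_{(t,x)})\cap\Char(P_{(s,y)})$ consists of points $(\txx,\txixi)$ with both $\txi$ in $\Char(P)$ or $\txi=0$, and $\teta$ in $\Char(P)$ or $\teta=0$; since $\omega^{(2)}_{G}$ has no wavefront over the zero sections in either factor individually (it is a bisolution, so $WF$ projects into the characteristic variety in each variable — this needs a brief argument using that $P_{(s,y)}$ is microelliptic off $\Char(P)$ in the $\teta$-variables and likewise for $P_{(t,x)}$), the intersection reduces to $\Char(P)\times\Char(P)$ on the relevant part of the cotangent bundle, giving the claimed inclusion for every $\tilde\epsilon>0$.

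The main obstacle I expect is bookkeeping the exact Sobolev orders so that the hypotheses of Theorem~\ref{elliptic} are met with room to spare: one must choose $\sigma$ just below $\tau-1$, verify $2+\sigma-\tau+\epsilon>0$ can be arranged together with $\omega_A$'s known regularity $1/2-\epsilon$, and then track the single derivative lost in passing from $\omega_A$ to $\omega^{(2)}_{G}$ so that the final exponent is exactly $-\tfrac12-\tilde\epsilon+\tau$. A secondary technical point is justifying that $\omega_A$ (with the density weights and the extra $\lambda_l^{-1}$) is genuinely a bisolution and lies in the stated Sobolev space in a way compatible with the microlocal theorem's local hypotheses; this is essentially contained in the Corollary and the remark $i\partial_t\omega_A=\omega^{(2)}_{G}$, but the density factors should be checked not to spoil the Klein--Gordon identity.
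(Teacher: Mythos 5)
Your overall strategy is the same as the paper's: feed $\omega_A$ (rather than $\omega^{(2)}_G$ directly) into Theorem~\ref{elliptic}, and then transfer to $\omega^{(2)}_G=i\partial_t\omega_A$ using the one-order loss under a first-order operator. That part is sound, although you have an arithmetic slip: with $\sigma=\tau-\tfrac32-\tilde\epsilon$ one gets $\sigma+2=\tau+\tfrac12-\tilde\epsilon$, not $\tau-\tfrac12-\tilde\epsilon$. Carrying the correct value through gives $WF^{\tau+\frac12-\tilde\epsilon}(\omega_A)\subset\Char(P_{(t,x)})\cap\Char(P_{(s,y)})$ and hence $WF^{\tau-\frac12-\tilde\epsilon}(\omega^{(2)}_G)\subset\Char(P_{(t,x)})\cap\Char(P_{(s,y)})$, which matches the exponent in the lemma; as written your conclusion is one order too weak.

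The substantive gap is in the last step, where you pass from $\Char(P_{(t,x)})\cap\Char(P_{(s,y)})$ to $\Char(P)\times\Char(P)$. You claim that the contributions with $\txi=0$ or $\teta=0$ can be excluded because the bisolution property forces $WF$ to ``project into the characteristic variety in each variable.'' This is precisely what does \emph{not} rule them out: by Eq.~\eqref{char3}, the set $\{\teta=0\}$ \emph{is} contained in $\Char(P_{(s,y)})$, since $P_{(s,y)}$ is not microelliptic when the $(s,y)$-covariables vanish, so microellipticity gives you nothing there. The missing ingredient is the time-translation invariance: $(\partial_t+\partial_s)\omega^{(2)}_G=0$ (equivalently for $\omega_A$), which forces $WF^l(\omega^{(2)}_G)\subset\Char(\partial_t+\partial_s)=\{\xi_0+\eta_0=0\}$ for all $l$. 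With this extra constraint, if $\teta=0$ then $\eta_0=0$, hence $\xi_0=0$, and then $(\tx,\txi)\in\Char(P)$ forces $h^{ij}\xi_i\xi_j=0$, i.e.\ $\txi=0$ as well, which is excluded in $T^*(M\times M)\setminus\{0\}$; symmetrically for $\txi=0$. Without invoking $(\partial_t+\partial_s)\omega^{(2)}_G=0$, the argument does not close.
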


\begin{proof}
Since $\omega^{(2)}_{G}$ satisfies $(\partial_t+\partial_s)\omega^{(2)}_{G}=0$ we conclude that for all $l\in \mathbb{R}$
\begin{equation}\label{time}
WF^l(\omega^{(2)}_{G})\subset WF(\omega^{(2)}_{G})\subset \Char(\partial_t+\partial_s)=\{(\tx,\xi_0,\xi, \ty,\eta_0,\eta)\in T^*(M\times M)\backslash\{0\}; \xi_0+\eta_0=0\},
\end{equation}
where the second inclusion follows from the standard theory 
of  pseudodifferential operators.

Now we have $P_{(t,x)}(\txx,D_\txx)\omega_A=P_{(s,y)}(\txx,D_\txx)\omega_A=0$.
{Choose $\epsilon<\tilde\epsilon/2$. Since $\omega_A\in H_{loc} ^{\frac12-\frac{\epsilon}2}(M\times M) = 
H_{loc} ^{(\frac12- \epsilon)+\frac{\epsilon}2}(M\times M)$, 
an  application of  Theorem \ref{elliptic} with $\sigma=-\frac{3}{2}+\tau-{\epsilon}< \tau-1$ }
shows that 
\begin{align*}
WF^{\frac{1}{2}+\tau-\tilde{\epsilon}}(\omega_A)\subset &\Char(P_{(t, x)})\cap \Char(P_{(s, y)});
\end{align*} 
{here we assume without loss of generality that $\epsilon$ is so small that 
$-\frac{3}{2}+\tau-{\epsilon}\ge0$}.
Eq.\eqref{char3} implies that 
\begin{align*}
WF^{\frac{1}{2}-\tilde{\epsilon}+\tau}(\omega_A)\subset &(\Char(P)\times \Char(P))\\
&\cup( \{(\txx,\txixi)\in T^{*}(M\times M)\backslash\{0\}; \txi=0,(\ty,\teta)\in  \Char(P)\}\\
&\cup  \{(\txx,\txixi)\in T^{*}(M\times M)\backslash\{0\}; (\tx,\txi)\in \Char(P), \teta=0\}.
\end{align*} 
If $\tilde \eta=0$, then $\eta_0=0$, and $\xi_0=0$ by   Eq. (\ref{time}). Since   $\Char P=\{(\tilde x,\tilde \xi);(\xi_0)^2={\sum_{i=1}^{3}}h^{ij}(x)\xi_i\xi_j\}$  we then have  $\tilde \xi=0$. Together with the corresponding argument for the case $\tilde \xi=0$ this shows that  
\begin{equation}\label{char2}
WF^{\frac{1}{2}-\tilde{\epsilon}+\tau}(\omega_A)\subset \Char(P)\times\Char(P);
\end{equation} 
 otherwise $0\in T^*(M\times M)$ will be in $WF^{\frac{1}{2}-\tilde{\epsilon}+\tau}(\omega_A)$.

Since $WF^{-\frac{1}{2}-\tilde{\epsilon}+\tau}(i\partial_{t}\omega_A)\subset WF^{\frac{1}{2}-\tilde{\epsilon}+\tau}(\omega_A)$  by \cite[Proposition B.3]{adiabatic},  we have 
\begin{equation*}
WF^{-\frac{1}{2}-\tilde{\epsilon}+\tau}(\omega^{(2)}_{G})=WF^{-\frac{1}{2}-\tilde{\epsilon}+\tau}(i\partial_{t}\omega_{A})\subset WF^{\frac{1}{2}-\tilde{\epsilon}+\tau}(\omega_A)\subset (\Char(P)\times\Char(P)).
\end{equation*}

\end{proof}

\begin{Theorem}\label{positive}
For all $s\in \R$, $WF^{s}(\omega^{(2)}_{G})\subset \{(\tx,\txi,\ty,\teta)\in T^*(M\times M); \txi^0> 0\}$ .
\end{Theorem}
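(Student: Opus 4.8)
The statement asserts a positivity property of the time-frequency component $\txi^0$ on the Sobolev wavefront set of $\omega^{(2)}_G$, for \emph{every} Sobolev order $s$. The key point is that the singular kernel \eqref{kernel} is built from the \emph{positive-frequency} exponentials $e^{i\lambda_j(t-s)}$ only, so that $\omega^{(2)}_G$ should be, up to smoothing, an eigenfunction-like object of $i\partial_t$ with positive spectrum. I would exploit this directly through the spectral representation rather than through any propagation-of-singularities argument.

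First I would reduce to a statement about the Fourier transform in the time variables. Fix a test function $\psi\in \mathcal D(M\times M)$ and, using Theorem \ref{normneg}, write the $H^s$-norm of $\psi\omega^{(2)}_G$ as a sum over the eigenmode index $j$ (only diagonal terms $j=k$ survive, as in the earlier proofs) of an integral over $(\xi_0,\eta_0)\in\mathbb R^2$ of $(|\xi_0|^2+|\eta_0|^2+2\lambda_j^2)^s$ times $\lambda_j^{-2}|\mathcal F\psi(\xi_0-\lambda_j,\eta_0+\lambda_j)|^2$. After the change of variables $\xi_0\mapsto\xi_0+\lambda_j$, $\eta_0\mapsto\eta_0-\lambda_j$ the Fourier transform of $\psi$ is evaluated at $(\xi_0,\eta_0)$, while the weight becomes $(|\xi_0+\lambda_j|^2+|\eta_0-\lambda_j|^2+2\lambda_j^2)^s$. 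The crucial observation is that the ``true'' covariable at which the singularity sits is shifted: the contribution of mode $j$ concentrates near $\xi_0=\lambda_j>0$, $\eta_0=-\lambda_j<0$. More precisely, I would argue that for any closed cone $\Gamma$ contained in $\{\txi^0\le 0\}$, and any $\varphi\in\mathcal D(M\times M)$, the integral $\int_\Gamma\langle\Xi\rangle^{2s}|\mathcal F(\varphi\omega^{(2)}_G)(\Xi)|^2\,d\Xi$ converges for all $s$: on $\Gamma$ one has $\xi_0\le 0$, so $|\xi_0-\lambda_j|\ge\lambda_j$, hence $|\mathcal F\varphi(\xi_0-\lambda_j,\eta_0+\lambda_j)|$ is bounded by $C_N\lambda_j^{-N}$ by rapid decay of $\mathcal F\varphi$ (after localising so the non-diagonal terms again drop out, or estimating them crudely), uniformly for $(\xi_0,\eta_0)\in\Gamma$ with $|\Xi|$ large; combined with Weyl's law this gives summability for every $s$ and every $N$.

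Then I would combine this with the already-established inclusion from Lemma \ref{char}, namely $WF^{-\frac12-\tilde\epsilon+\tau}(\omega^{(2)}_G)\subset\Char(P)\times\Char(P)$, together with the constraint $\xi_0+\eta_0=0$ on the full wavefront set from \eqref{time}. On $\Char(P)\times\Char(P)$ we have $(\xi^0)^2=\sum h^{ij}\xi_i\xi_j\ge 0$ and likewise for $\eta$, so $\xi^0\ne 0$ unless $\txi=0$; the point $\txi=0$ is excluded from $T^*(M\times M)\setminus 0$ only if simultaneously $\teta\ne 0$, but then $\eta^0\ne 0$ forces $\xi^0=-\eta^0\ne 0$ by \eqref{time}, a contradiction with $\txi=0$ giving $\xi^0=0$. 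Hence on the relevant piece of the wavefront set $\xi^0\ne 0$, and the computation of the previous paragraph, showing the absence of singularities in $\{\xi^0\le 0\}$ \emph{in every Sobolev order}, upgrades this to $\xi^0>0$ for all $s$ — note that the bound is uniform in $s$, which is what allows us to drop the order-$(-\frac12-\tilde\epsilon+\tau)$ restriction here.

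The main obstacle I anticipate is handling the off-diagonal terms $j\ne k$ cleanly and making the ``concentration near $\xi_0=\lambda_j$'' estimate rigorous and uniform over the cone $\Gamma$: one must check that for $(\xi_0,\eta_0)$ in a cone about a direction with $\xi_0^0\le 0$ and $|(\xi_0,\eta_0)|$ large, the distance from $(\xi_0,\eta_0)$ to the point $(\lambda_j,-\lambda_j)$ is bounded below by a fixed multiple of $\max(|(\xi_0,\eta_0)|,\lambda_j)$, so that the rapid decay of $\mathcal F\psi$ beats both the polynomial weight $\langle\Xi\rangle^{2s}$ and the number of lattice points $\lambda_j$ below a given size coming from Weyl's law. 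This is a routine but slightly delicate stationary-type estimate; once it is in place the positivity follows immediately.
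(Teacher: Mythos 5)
The paper's proof takes a completely different route: it exhibits $\omega^{(2)}_G$ as the boundary value of a $\mathcal D'(\Sigma\times M)$-valued function $F(z)$ that is holomorphic in a single complex time variable $z$ on the strip $\R+i\,(0,\delta)$, establishes the limit $F(t+i\epsilon)\to\omega^{(2)}_G$ by dominated convergence, and then invokes \cite[Proposition 7.5]{gerard} (a ``boundary value of a holomorphic function in a tube'' theorem) to conclude $WF(\omega^{(2)}_G)\subset\{\xi_0>0\}$ for the \emph{full} $C^\infty$ wavefront set; the statement for every $s$ then follows from the trivial inclusion $WF^s\subset WF$. Your proposal shares the underlying heuristic (all the frequencies $\lambda_j$ are positive), but the real-variable implementation you sketch has a genuine gap, which is precisely the reason the paper appeals to a holomorphic-function argument instead.

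The gap is in the passage from the eigenmode expansion to a cone estimate in $T^*(M\times M)$. Theorem \ref{normneg} characterises \emph{global} Sobolev norms via the partial Fourier transform $\mathcal F_{(t,s)}$ and the discrete index $(\lambda_j,\lambda_k)$. It is not a characterisation of the local Fourier transform of $\varphi\,\omega^{(2)}_G$ in coordinates, which is what $WF^s$ requires. The shift $\xi_0\mapsto\xi_0-\lambda_j$, and the bound $|\xi_0-\lambda_j|\ge\lambda_j$ on $\{\xi_0\le 0\}$, control the decay of $\mathcal F_{(t,s)}\psi$ in the \emph{time} covariables only; this yields factors $\lambda_j^{-N}$, not factors $\langle\Xi\rangle^{-N}$ over the full covariable $\Xi=(\xi_0,\xi,\eta_0,\eta)$. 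For a closed cone $\Gamma\subset\{\txi^0\le 0\}$ that touches $\{\xi_0=0\}$ — which must be allowed, since the claim is $\xi^0>0$ strictly — there are points in $\Gamma$ with $|\xi_0|$ bounded and $|\xi|,|\eta|$ arbitrarily large; there the time-decay gives no gain against the weight $\langle\Xi\rangle^{2s}$. To make the estimate close, you would instead need rapid decay of the \emph{spatial} localised transforms $\mathcal F(\psi\phi_j)(\xi)$ for $|\xi|\gg\lambda_j$, and that decay is limited by the $C^\tau$ regularity of $h$: one only gets $\|\phi_j\|_{H^m}\lesssim\lambda_j^m$ up to $m\lesssim\tau+1$. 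So your argument cannot deliver the conclusion ``for all $s$''; it degrades to finitely many orders, which is exactly the kind of statement the earlier lemmas already give. The strength of the paper's approach is that the holomorphic-extension theorem produces a \emph{$C^\infty$} wavefront-set bound that is insensitive to the finite Sobolev regularity of the eigenfunctions, and the ``for all $s$'' follows for free. Finally, your concluding step combining the cone estimate with Lemma \ref{char} and \eqref{time} to pass from $\xi^0\neq 0$ to $\xi^0>0$ is redundant (if the cone estimate worked it would already give $\xi^0>0$) and does not repair the gap, since Lemma \ref{char} itself only holds for $s\le-\tfrac12-\tilde\epsilon+\tau$.
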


\begin{proof}
We define $F:\R+i]0,\delta[\subset\mathbb{C}\rightarrow \mathcal D'(\Sigma\times M)$ for $\delta>0$  by

\begin{equation}
\ang{F(z),\psi_1(s)\psi_2(x)\psi_3(y)}=\int_{M}\sum_{j}e^{iz\lambda_j}e^{-is\lambda_j}\left(\int_{\Sigma}\psi_2(x)\phi_j(x)dx\right)\phi_{j}(y)\psi_1(s)\psi_3(y)dyds.
\end {equation}

Notice that $\partial_zF:\R+i]0,\delta[\subset\mathbb{C}\rightarrow\mathcal D'(\Sigma\times M)$ is  given by

\begin{equation}
\ang{\partial_zF(z),\psi_1(s)\psi_2(x)\psi_3(y)}=i\int_{M}\sum_{j}\lambda_je^{iz\lambda_j}e^{-is\lambda_j}\left(\int_{\Sigma}\psi_2(x)\phi_j(x)dx\right)\phi_{j}(y)\psi_1(s)\psi_3(y)dyds
\end {equation}

and therefore $F$ is a holomorphic function with values in  $\mathcal D'(\Sigma\times M)$\cite[Theorem 10.11]{kaballo}. Moreover,  for $\varphi(t)\in \mathcal D(\R)$ we have 

\begin{align}
&|\ang{\ang{F(t+i\epsilon),\psi_1(s)\psi_2(x)\psi_3(y)},\varphi(t)}|\\
=&\left|\int_{\R}\int_{M}\sum_{j}e^{i(t+i\epsilon)\lambda_j}e^{-is\lambda_j}\left(\int_{\Sigma}\psi_2(x)\phi_j(x)dx\right)\phi_{j}(y)\psi_1(s)\psi_3(y)dyds\varphi(t)dt\right|\\
\le&\left|\int_\R\int_\R\sum_{j}\underbrace{e^{i(t+i\epsilon)\lambda_j}e^{-is\lambda_j}(\psi_2,\phi_j)_{L^2(\Sigma)}(\psi_3,\phi_j)_{L^2(\Sigma)}\psi_1(s)\varphi(t)}_{h_\epsilon(s,t,j)}dsdt\right|
\end{align}

Now let $g(s,t,j):=|(\psi_2,\phi_j)_{L^2(\Sigma)}(\psi_3,\phi_j)_{L^2(\Sigma)}||\psi_1(s)\varphi(t)|$, then

\begin{equation}
|h_\epsilon(s,t,j)|\le g(s,t,j)
\end{equation}

Moreover,

\begin{equation}
\sum_{j}(\psi_2,\phi_j)_{L^2(\Sigma)}(\psi_3,\phi_j)_{L^2(\Sigma)}=\int_\Sigma \psi_2(w)\psi_3(w)\sqrt{h(w)}dw.
\end{equation}

{This implies  the sequence is unconditionally convergent and therefore absolutely convergent.} 

Hence, $g(s,t,j)\in L^1(dt\times ds\times \mu)$, where $\mu$ is the counting measure on $\N$.

Using dominated convergence we obtain 

\begin{align}
&\lim_{\epsilon\rightarrow0^+ }\ang{\ang{F(t+i\epsilon),\psi_1(s)\psi_2(x)\psi_3(y)},\varphi(t)}\\
=&\lim_{n\rightarrow\infty }\int_{\R}\int_{M}\sum_{j}e^{i(t+i\epsilon_n)\lambda_j}e^{is\lambda_j}\left(\int_{\Sigma}\psi_2(x)\phi_j(x)dx\right)\phi_{j}(y)\psi_1(s)\psi_3(y)dyds\varphi(t)dt\\
=&\int_{\R}\int_{M}\sum_{j}\lim_{n\rightarrow\infty }e^{i(t+i\epsilon_n)\lambda_j}e^{is\lambda_j}\left(\int_{\Sigma}\psi_2(x)\phi_j(x)dx\right)\phi_{j}(y)\psi_1(s)\psi_3(y)dyds\varphi(t)dt\\
=&\int_{\R}\int_{M}\sum_{j}e^{it\lambda_j}e^{is\lambda_j}\left(\int_{\Sigma}\psi_2(x)\phi_j(x)dx\right)\phi_{j}(y)\psi_1(s)\psi_3(y)dyds\varphi(t)dt\\
=&\omega^{(2)}_{G}(\varphi\psi_1\psi_2\psi_3).
\end{align}

Therefore $\lim_{\epsilon\rightarrow0^+ }\ang{F(t+i\epsilon),\cdot}=\omega^{(2)}_{G}\in {\mathcal D}'(M\times M)$. Applying \cite[Proposition 7.5]{gerard}

we obtain 

\begin{equation}
WF^{s}(\omega^{(2)}_{G})\subset WF(\omega^{(2)}_{G})\subset \{(\tx,\txi,\ty,\teta)\in T^*(M\times M); \xi_0> 0\}
\end{equation}

which gives $\displaystyle\sum_{\mu=0}^3 g^{0\mu}\xi_\mu=\xi^0>0$.

\end{proof}

\begin{Lemma}\label{causal}
Let $(\tx,\ty)\in M\times M$ be such that $\tilde x$ and $\tilde y$ are not causally related, i.e. $\tx\notin J(\ty)$. Then $(\tx,\txi,\ty,\teta)\notin WF^{-\frac{1}{2}-\epsilon+\tau} (\omega^{(2)}_{G})$.
\end{Lemma}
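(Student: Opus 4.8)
The plan is to establish the microlocal regularity of $\omega^{(2)}_{G}$ away from the causal shadow by exploiting the propagation of singularities for $P_{(t,x)}$ together with the fact that $\omega^{(2)}_{G}$ is a bisolution whose singularities are confined to $\Char(P)\times\Char(P)$ by Lemma~\ref{char}. First I would recall that the bicharacteristic strips of $p_2$ (the principal symbol of $P_{(t,x)}$) projecting to $M\times M$ are, on the first factor, the lifted null geodesics of $g$; a point $(\tx,\txi,\ty,\teta)$ with $\tx\notin J(\ty)$ cannot lie on a null bicharacteristic of $P_{(t,x)}$ that also meets the diagonal singular support of $\omega^{(2)}_{G}$, because the only source of singularities of a bisolution is the diagonal $\{\tx=\ty\}$ together with null-geodesically related points. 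More precisely, I would argue by tracking the bicharacteristic through $(\tx,\txi,\ty,\teta)$ backwards and forwards: since it is an integral curve of $H_{p_2}$ acting only on the $(t,x)$ variables, the point $\ty$ stays fixed, and the $\tx$-component moves along a null geodesic; if $\tx$ is not causally related to $\ty$, then no point of this entire curve is causally related to $\ty$ either (the causal future/past is closed under following causal curves), so the curve never reaches the diagonal and never reaches a null-related configuration.

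The key technical input is Theorem~\ref{propagation}: writing $P_{(t,x)}\omega^{(2)}_{G}=f$ with $f$ the contribution coming from the non-smooth remainder after symbol smoothing, I would check that $f\in H^s_{mcl}$ and $P^b_{(t,x)}\omega^{(2)}_{G}\in H^s_{mcl}$ along the relevant conic neighbourhood for the appropriate $s$ — here Remark~\ref{taylor} gives $P^b_{(t,x)}\omega^{(2)}_{G}\in H^s_{loc}$ once $\omega^{(2)}_{G}\in H^{2+s-\tau\delta}_{loc}$, which by the regularity result $\omega^{(2)}_{G}\in H^{-\frac12-\tep}_{loc}$ and by choosing $\delta$ close to $1$ covers $s$ up to roughly $\tau-\frac52$, enough for the target index $s+1=-\frac12-\epsilon+\tau$. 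Strictly, $P_{(t,x)}\omega^{(2)}_{G}=0$ on the nose (it is a bisolution), so $f=0$ is trivially microlocally smooth everywhere. Then I would invoke Theorem~\ref{propagation} in a conic neighbourhood $\Gamma$ of the bicharacteristic $\gamma$ through $(\tx,\txi,\ty,\teta)$: it suffices to find one point $\gamma(0)$ on that curve at which $\omega^{(2)}_{G}$ is microlocally $H^{-\frac12-\epsilon+\tau+1}$, and the theorem propagates this regularity along all of $\gamma$, in particular back to $(\tx,\txi,\ty,\teta)$.

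The seed point is furnished as follows: along $\gamma$ the first-factor base point $\tx(u)$ runs over an entire (inextendible) null geodesic with $\ty$ fixed. Since $\tx\notin J(\ty)$ and the geodesic is null, infinitely many points $\tx(u)$ on it lie outside both $J^+(\ty)$ and $J^-(\ty)$; at such a base point the pair $(\tx(u),\ty)$ lies outside the diagonal and outside the null-related set, hence outside $\Char(P)\times\Char(P)$ intersected with the wavefront — more directly, Lemma~\ref{char} shows $WF^{-\frac12-\tep+\tau}(\omega^{(2)}_{G})\subset\Char(P)\times\Char(P)$, and the covector $(\txi(u),\teta)$ at $\gamma(u)$, being a nonzero multiple of the transported null covector on the first leg with a fixed covector on the second leg that need not be null — wait, $\teta$ is fixed along $\gamma$ and lies in $\Char(P)$ by Lemma~\ref{char}; so one cannot immediately exclude $\gamma(u)$ from the wavefront purely by base-point considerations. \emph{The main obstacle} is therefore to produce a genuine seed point: I would instead argue that if \emph{every} point of $\gamma$ were in $WF^{-\frac12-\epsilon+\tau}(\omega^{(2)}_{G})$, then by Lemma~\ref{char} the whole bicharacteristic would lie in $\Char(P)\times\Char(P)$, forcing $\teta$ null and, crucially, forcing the base points $(\tx(u),\ty)$ to be null-geodesically related for \emph{all} $u$ — but two points cannot be connected by null geodesics for a whole one-parameter family of first-factor positions with $\ty$ fixed unless $\tx$ and $\ty$ were causally related to begin with, contradicting $\tx\notin J(\ty)$. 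Hence some $\gamma(u_0)\notin WF^{-\frac12-\epsilon+\tau}(\omega^{(2)}_{G})$, i.e.\ $\omega^{(2)}_{G}\in H^{-\frac12-\epsilon+\tau}_{mcl}(\gamma(u_0))$; applying Theorem~\ref{propagation} with index $s=-\frac32-\epsilon+\tau$ (so $s+1=-\frac12-\epsilon+\tau$) along $\gamma$ then gives $\omega^{(2)}_{G}\in H^{-\frac12-\epsilon+\tau}_{mcl}$ at $(\tx,\txi,\ty,\teta)$, which is exactly the claim. I expect the delicate points to be (i) verifying the hypotheses $f\in H^s_{mcl}$ and $P^b u\in H^s_{mcl}$ with the bookkeeping of $\tau$, $\delta$ and the Sobolev orders lining up via Remarks~\ref{propagation1} and~\ref{taylor}, and (ii) making the geometric "no whole family of null relations" argument precise, most cleanly by passing to the first factor alone and using that the set of points null-related to the fixed $\ty$ is the light cone of $\ty$, a closed set not containing $\tx$, so a neighbourhood of $\tx$ on the geodesic stays outside it.
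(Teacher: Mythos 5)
Your approach is genuinely different from the paper's, and it has two real gaps.

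\textbf{Gap 1: the seed point.} You try to manufacture a point on the bicharacteristic $\gamma$ at which $\omega^{(2)}_{G}$ is already microlocally regular by arguing that Lemma~\ref{char} would otherwise force the base points $(\tx(u),\ty)$ to be null-geodesically related for all $u$. But Lemma~\ref{char} only says $WF^{-\frac12-\tep+\tau}(\omega^{(2)}_G)\subset\Char(P)\times\Char(P)$: this is a pointwise condition on the two covectors (each lightlike for the metric at its own base point) and carries \emph{no} information about a geodesic joining $\tx(u)$ and $\ty$. The claim that wavefront singularities of a bisolution sit only over null-geodesically related pairs is exactly what the paper is in the process of establishing (Lemmas~\ref{diag2} and Theorem~\ref{main}); invoking it here is circular. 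As you yourself observe, once $\teta\in\Char(P)$ one cannot exclude any point of $\gamma$ from the wavefront by base-point considerations alone — and the contradiction you then set up does not close this loophole.

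\textbf{Gap 2: the Sobolev indices.} Running Theorem~\ref{propagation} directly on $\omega^{(2)}_G$ does not reach the stated order. With $\omega^{(2)}_G\in H_{loc}^{-\frac12-\epsilon}$ and Remark~\ref{taylor} ($u\in H^{2+s-\tau\delta}_{loc}\Rightarrow P^b_{(t,x)}u\in H^s_{loc}$), the best admissible $s$ is $s=\tau\delta-\frac52-\epsilon<\tau-\frac52$, so the propagated regularity is $s+1<\tau-\frac32$, one full order short of the target $\tau-\frac12-\epsilon$. Your "roughly $\tau-\frac52$, enough for $s+1=-\frac12-\epsilon+\tau$" is an arithmetic slip: you need $s=\tau-\frac32-\epsilon$, which is not covered. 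This is precisely why the paper never applies the propagation theorem to $\omega^{(2)}_G$ directly but always to the one-order-smoother $\omega_A\in H^{\frac12-\epsilon}_{loc}$, transferring back via $\omega^{(2)}_G=i\partial_t\omega_A$.

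For comparison, the paper's proof avoids propagation of singularities entirely for this lemma. It first combines Lemma~\ref{char}, Theorem~\ref{positive} and Eq.~\eqref{time} to place $WF^{-\frac12-\tep+\tau}(\omega^{(2)}_G)\subset N_+\times N_-$. It then restricts to the open set $\cal Q$ of causally separated pairs, where the causal propagator $K_G$ vanishes, so $\omega^{(2)}_G|_{\cal Q}=\omega^+|_{\cal Q}$ with $\omega^+$ the (flip-symmetric) kernel of $A^{-1/2}\cos(A^{1/2}(t-s))$. Covariance of $WF^s$ under the flip diffeomorphism $\rho(\tx,\ty)=(\ty,\tx)$ then forces $WF^{-\frac12-\epsilon+\tau}(\omega^{(2)}_G|_{\cal Q})\subset(N_+\times N_-)\cap(N_-\times N_+)=\emptyset$. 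This symmetry argument is both sharper (it gives the result at the higher order $-\frac12-\epsilon+\tau$ directly) and free of the seed-point and index-bookkeeping difficulties that your propagation-based route would need to resolve.
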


\begin{proof}
From Eq. \eqref{time}, Lemma \ref{char} and Theorem \ref{positive} we conclude  that 

\begin{equation}
WF^{-\frac{1}{2}-\tep+\tau} (\omega^{(2)}_{G})\subset N_+\times N_{-}
\end{equation}
where $N_{\pm}:=\{(t,x,\xi_0,\xi)\in {\Char}(P); \pm\xi_0>0\}$

Now consider the restriction $\omega^{(2)}_{G}|_{\cal{Q}}:=\omega^{(2)}_{G}:{\D{M\times M}}|_{\cal{Q}}\rightarrow \mathbb{C}$, where the set  $\cal{Q}$  is defined as the set of pairs of causally separated points $(\tx,\ty)\in M\times M$.

Notice that $\omega^{(2)}_{G}= \omega^{+}+i K_G$ where $\omega^+$ is the Schwartz kernel of $A^{-\frac12}\cos(A^{\frac12}(t-s))$ and $K_G$ is the causal propagator. Since, $K_G|_{\cal{Q}}=0$ by \cite[Lemma 5.1]{causal} we have $\omega^{(2)}_{G}|_{\cal{Q}}=\omega^{+}|_{\cal{Q}}$.

Also, the ``flip" map $\rho(\tx,\ty)=(\ty,\tx)$ is a diffeomorphism of $\cal{Q}$ and we have $\rho^*\omega^{+}=\omega^{+}$. Moreover, using  the covariance of the Sobolev wavefront set under diffeomorphisms (see Appendix \ref{diffeo}), we have 

\begin{equation}
WF^{-\frac12-\epsilon+\tau}(\omega^{+}|_{\cal{Q}})=WF^{-\frac12-\epsilon+\tau}(\rho^*\omega^{+}|_{\cal{Q}})=\rho^*WF^{-\frac12-\epsilon+\tau}(\omega^{+}|_{\cal{Q}})
\end{equation}

Moreover, $\rho^*( N_+\times N_{-})= N_-\times N_{+}$ which implies

\begin{equation}
WF^{-\frac12-\epsilon+\tau}(\omega^{(2)}_{G}|_{\cal{Q}})\subset( N_+\times N_{-})\cap( N_-\times N_{+})=\emptyset.
\end{equation}

\end{proof}

\begin{Lemma}\label{diag2}
 If $(\tx,\txi,\tx,\teta)\in WF^{-\frac{3}{2}-\tilde{\epsilon}+\tau}(\omega^{(2)}_{G})$ for some $\tilde{\epsilon}>0$, then $\teta=-\txi$.
\end{Lemma}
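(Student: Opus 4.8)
The statement asserts that on the diagonal, the only covariables that can appear in the $H^{-3/2-\tilde\epsilon+\tau}$-wavefront set of $\omega^{(2)}_G$ are those with $\teta = -\txi$; this is the standard ``propagation along the diagonal'' ingredient used to pin down the wavefront set of a bisolution once one knows (Lemma \ref{char}, Theorem \ref{positive}) that it lives in $N_+\times N_-$ and satisfies $\xi_0+\eta_0=0$. The plan is to argue by contradiction: suppose $(\tx,\txi,\tx,\teta)\in WF^{-3/2-\tilde\epsilon+\tau}(\omega^{(2)}_G)$ with $\teta\neq -\txi$. By Lemma \ref{char} both $(\tx,\txi)$ and $(\tx,\teta)$ lie in $\Char(P)$, and by Eq.~\eqref{time} we have $\xi_0+\eta_0=0$; combined with Theorem \ref{positive} this forces $\xi_0>0$, $\eta_0=-\xi_0<0$, and $\sum h^{ij}\xi_i\xi_j = \xi_0^2 = \eta_0^2 = \sum h^{ij}\eta_i\eta_j$. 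If $\teta\neq-\txi$ then (since the time components already agree up to sign and in fact $\eta_0=-\xi_0$) the spatial parts satisfy $\eta\neq-\xi$ while having the same $h$-length, so the two null covectors $\txi$ and $-\teta$ are distinct points of the (co)light cone over $\tx$, i.e.\ $(\tx,\txi)$ and $(\tx,-\teta)$ lie on \emph{different} null bicharacteristics through $\tx$.

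\textbf{Key steps.} First I would pass from $\omega^{(2)}_G$ to $\omega_A$ (using $i\partial_t\omega_A=\omega^{(2)}_G$ and \cite[Proposition B.3]{adiabatic}, exactly as in Lemma \ref{char}), so that I may work with a bisolution that additionally has the better Sobolev regularity $\omega_A\in H^{1/2-\epsilon}_{loc}$; equivalently, translate everything back to $\omega^{(2)}_G$ at the end by peeling off one $\partial_t$. Second, I would invoke the propagation of singularities theorem (Theorem \ref{propagation}) for the operator $P_{(s,y)}$ acting in the second slot: freeze the first variable's covariable, and note that since $P_{(s,y)}\omega_A=0$ and the nonsmooth remainder term $P^b_{(s,y)}\omega_A$ lies in a sufficiently high local Sobolev space by Remark \ref{taylor} (here is where $\tau>2$ and the regularity budget $2+s-\tau\delta$ enter, with $\delta$ chosen in $(0,1)$ close to $1$), the $H^{s+1}$-wavefront set is invariant along integral curves of $H_{p_2}$ in the $(s,y;\eta_0,\eta)$ variables with $(\tx,\txi)$ held fixed. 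The bicharacteristic through $(\tx,\teta)$ is a nontrivial null geodesic (lifted to $T^*M$), so it leaves the point $\ty=\tx$; along it the base point $\ty$ moves away from $\tx$ into a region where $\tx$ and $\ty$ are \emph{spacelike separated} (for $\ty$ close enough to $\tx$ along a null geodesic emanating in a direction transverse to the null geodesic carrying $\txi$ — more precisely: if $-\teta$ is not proportional to $\txi$ then the null geodesic through $\ty$ with cotangent $\teta$ is not the one through $\tx$ with cotangent $\txi$, and for small parameter the two base points become causally unrelated). Then Lemma \ref{causal} says the relevant wavefront component vanishes there, and propagation back along the bicharacteristic gives a contradiction with the assumed membership.

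\textbf{Main obstacle.} The delicate point is the \emph{geometric} claim that moving $\ty$ a little along the null bicharacteristic determined by $(\tx,\teta)$, while keeping $\tx$ and $\txi$ fixed, lands the pair $(\tx,\ty)$ in the spacelike-separated set $\mathcal Q$ — i.e.\ that distinctness of the two null directions at the coincident point really does force immediate causal separation of the moving pair. One must rule out the degenerate possibility that the two null geodesics, though issuing in different directions, reconverge or stay causally related; in an ultrastatic spacetime with $C^\tau$ ($\tau>2$) spatial metric this follows from the local structure of null geodesics (they are, in these coordinates, straight in $t$ and unit-speed geodesics of $h$ in $x$), and the only way $(\tx,\ty)$ stays causally related for all small parameters is $\teta \parallel -\txi$ with matching time orientation, i.e.\ $\teta=-\txi$. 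A secondary bookkeeping issue is checking that the Sobolev orders line up: one needs $\omega_A$ (or $\omega^{(2)}_G$) to be microlocally $H^{s+1}$ at the far point $\gamma(\text{small})$ with $s+1 = -1/2-\tilde\epsilon+\tau$ after the $\partial_t$ shift, which is exactly the order appearing in Lemmas \ref{char} and \ref{causal}; and that the remainder term estimate of Remark \ref{taylor} is applicable at that order, which it is provided $\tau>2$ and $\epsilon$ is chosen small. Once these are in place, the contradiction closes the argument.
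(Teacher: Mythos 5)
There is a genuine geometric gap. You propose to freeze $(\tx,\txi)$ in the first slot and propagate the singularity only in the second slot, moving $\ty$ away from $\tx$ along the null bicharacteristic with initial data $(\tx,\teta)$, and you claim that for small parameter the pair $(\tx,\ty)$ becomes spacelike separated. This cannot happen: by construction $\ty$ moves along a null geodesic \emph{emanating from} $\tx$, so $\ty$ stays on the light cone of $\tx$ and hence remains lightlike (in particular causally) related to $\tx$ for \emph{every} parameter value, regardless of whether $\teta$ is proportional to $-\txi$. The sentence in your ``Main obstacle'' paragraph asserting that ``the only way $(\tx,\ty)$ stays causally related for all small parameters is $\teta \parallel -\txi$'' is false, and Lemma~\ref{causal} can therefore never be triggered by your one-sided propagation. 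The paper avoids this by propagating in \emph{both} slots: applying Theorem~\ref{propagation} first with $P_{(t,x)}$ and then with $P_{(s,y)}$ puts the full product $(\gamma(\tx,\txi),\gamma(\tx,\teta))$ of bicharacteristics into $WF^{-\frac12-\epsilon+\tau}(\omega_A)$; one then intersects both lifted geodesics with a common Cauchy surface $\{t_0\}\times\Sigma$ chosen so the two base points are distinct. Two distinct points on an acausal Cauchy surface are spacelike separated, and only then does Lemma~\ref{causal} deliver the contradiction.

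A secondary issue you gloss over: the contradiction is with Lemma~\ref{causal}, which concerns $\omega^{(2)}_G$, but the propagation is done on $\omega_A$. ``Peeling off one $\partial_t$'' goes the wrong way — it tells you $WF^{s-1}(\partial_t\omega_A)\subset WF^{s}(\omega_A)$, not the reverse. The paper establishes the needed inclusion $WF^{-\frac12-\epsilon+\tau}(\omega_A)\subset WF^{-\frac12-\epsilon+\tau}(\omega^{(2)}_G)$ separately, via Theorem~6.1.1' of \cite{duistermaat} applied to $i\partial_t$ together with the observation that $(\Char(P)\times\Char(P))\cap\Char(\partial_t+\partial_s)\cap\Char(i\partial_t)=\emptyset$. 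Without this step the propagated singularity lives in the wrong distribution and cannot be compared to Lemma~\ref{causal}. Your Sobolev-order bookkeeping and the use of Remark~\ref{taylor} are otherwise in line with the paper, but both the two-sided propagation and the inclusion $WF(\omega_A)\subset WF(\omega^{(2)}_G)$ are essential and missing.
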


\begin{proof}
Suppose  $\tilde \eta$ and $\tilde \xi$ are linearly independent, i.e., $\teta\neq \lambda\txi$ for $\lambda\in \mathbb{R}$. By Lemma \ref{char} $(\tilde x, \tilde x, \tilde \xi,\tilde \eta)\in \Char(P)\times\Char(P)$.
Now we choose a Cauchy hypersurface $\Sigma_{ t_0}=\{t_0\}\times \Sigma$ such that the null geodesic with initial data $(\tx,\txi)$ and the null geodesic with initial data $(\tx,\teta)$ intersect it. These points of intersections are unique by global hyperbolicity  (see\cite{vien, clemens,minguzzi} for low regularity definitions) . Moreover, using the condition $\teta\neq\lambda\txi$, we can choose $\Sigma_ {t_0}$ such that these points are distinct. We denote these points  by $(t_0,x_0), (t_0, y_0)$. Clearly, these points are not causally related. 

Notice that $\omega_A$ satisfies $P_{(t,x)}\omega_A=0$, 
{$P^b_{(t,x)}\omega_A\in H^{-\frac{3}{2}-\epsilon+\tau}$}. This allows us to choose {$s=-\frac{3}{2}-\epsilon+\tau$}, which is in the range  $0<s<\tau-1$ in Theorem \ref{propagation}.
This propagation of singularities result applied to the distribution $\omega_A$ and the operator $P_{(t,x)}$  guarantees that if  
{$(x,x,\txi,\teta)\in WF^{-\frac{3}{2}-\epsilon+\tau}(\omega^{(2)}_{G})\subset WF^{-\frac{1}{2}-\epsilon+\tau}(\omega_A)$} then the full null bicharacteristic is contained in the wavefront set i.e. $(\gamma(\tx,\txi),(\tx,\teta))\in {WF^{-\frac{1}{2}-\epsilon+\tau}(\omega_A)}$, where $\gamma(\tx,\txi)$ is the null bicharacteristic with initial data $(\tx,\txi)$. Similarly, using the operator $P_{(s,y)}$, we obtain
 $(\gamma(\tx,\txi),\gamma (\tx,\teta))\in {WF^{-\frac{1}{2}-\epsilon+\tau}(\omega_A)}$   where $\gamma(\tx,\teta)$ is the null bicharacteristic with initial data $(\tx,\teta)$.

Now we show that {$(\gamma(\tx,\txi),\gamma (\tx,\teta))\in {WF^{-\frac{1}{2}-\epsilon+\tau}(\omega^{(2)}_{G})}$.}

By Theorem 6.1.1' from \cite{duistermaat} we have 
\begin{equation}\label{propa}
{ WF^{-\frac{1}{2}-\epsilon+\tau}(\omega_A)\backslash 
WF^{-\frac{1}{2}-\epsilon+\tau}(\omega^{(2)}_{G})
= WF^{-\frac{1}{2}-\epsilon+\tau}(\omega_A)\backslash 
WF^{-\frac{1}{2}-\epsilon+\tau}(i\partial_{t}\omega_A)}
\subset\Char (i\partial_{t}).
\end{equation}

However, using Eq.\eqref{char2} and that $(\partial_t+\partial_s)\omega_A=0$, we have the inclusion 
\begin{align}\label{empty}
{WF^{-\frac{1}{2}-\epsilon+\tau}(\omega_A)\subset WF^{\frac{1}{2}-\epsilon+\tau}(\omega_A)}
\subset (\Char(P)\times\Char(P))\cap\Char(\partial_t+\partial_s).
\end{align}

Since $(\Char(P)\times \Char(P))\cap \Char(\partial_t+\partial_s)\cap \Char(i\partial_{t})=\emptyset$, 
taking the intersection between Eq.\eqref{propa} and Eq.\eqref{empty}, we obtain that the left hand side of Eq.\eqref{propa} must be empty. Therefore,
\begin{equation}\label{inclusion}
 {WF^{-\frac{1}{2}-\epsilon+\tau}(\omega_A)\subset WF^{-\frac{1}{2}-\epsilon+\tau}(\omega^{(2)}_{G}).}
\end{equation}
Hence,  $(\gamma(\tx,\txi),\gamma (\tx,\teta))\in {WF^{-\frac{1}{2}-\epsilon+\tau}}(\omega^{(2)}_{G})$. In particular  
$(t_0,x_0,\txi, t_0, y_0, \teta)\in{WF^{-\frac{1}{2}-\epsilon+\tau}}(\omega^{(2)}_{G}))$. However,  this is a contradiction to {Lemma \ref{causal}}. Therefore, $\teta=\lambda\txi$ for some $\lambda\in \R$. Using Eq.(\ref{time}) we have $\xi_0=-\eta_0$ which gives $\lambda=-1$ i.e. $\teta=-\txi$.
\end{proof}

{We have used the distribution $\omega_{A}$, because a direct application of Theorem \ref{elliptic} for $\omega^{(2)}_{G}$ is not possible, since for $\delta$ close to $1$, $\sigma$ cannot take the value  $-\frac{1}{2}$.}

Now we state the main result

\begin{Theorem}\label{main}
{$WF'^{-\frac{3}{2}-{\epsilon}+\tau}(\omega^{(2)}_{G})\subset C^+$ for every ${\epsilon}>0$} and $C^+$ as in Eq.\eqref{C}. 
\end{Theorem}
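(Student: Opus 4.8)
The plan is to assemble the pieces already established—Eq.~\eqref{time} (the $\xi_0+\eta_0=0$ constraint), Lemma~\ref{char} (the inclusion in $\Char(P)\times\Char(P)$ at level $-\frac12-\tilde\epsilon+\tau$), Theorem~\ref{positive} (positivity $\xi^0>0$, hence $\eta^0<0$), Lemma~\ref{causal} (no $H^{-1/2-\epsilon+\tau}$-wavefront over causally unrelated pairs), and Lemma~\ref{diag2} (on the diagonal the only possible covector pair is $(\txi,-\txi)$)—into the statement that $WF'^{-\frac32-\epsilon+\tau}(\omega^{(2)}_{G})\subset C^+$. The key point is that $C^+$ is exactly the set of $(\tx,\txi,\ty,\teta)$ with both covectors null (cotangent to null geodesics), future-directed, and with $(\tx,\txi)$ and $(\ty,-\teta)$ lying on the same null geodesic as parallel transports of each other; so I must show any element of the $s$-wavefront set with $s\le -\frac32-\epsilon+\tau$ has these three properties.

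\textbf{Step 1: nullness and future-directedness.} Since $WF^{-\frac32-\epsilon+\tau}(\omega^{(2)}_{G})\subset WF^{-\frac12-\epsilon+\tau}(\omega^{(2)}_{G})$, Lemma~\ref{char} gives membership in $\Char(P)\times\Char(P)$, i.e.~$g^{ab}(\tx)\txi_a\txi_b=0$ and $g^{ab}(\ty)\teta_a\teta_b=0$. Theorem~\ref{positive} (valid for all $s$) gives $\txi^0>0$; combined with $(\partial_t+\partial_s)\omega^{(2)}_{G}=0$ and the corresponding wavefront constraint it gives $\xi_0+\eta_0=0$, hence $\eta_0<0$, so $\teta^0<0$, i.e.~$-\teta$ is future-directed. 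This already puts us in the right ``component'' to compare with $C^+$ after the sign flip in $WF'$.

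\textbf{Step 2: restriction to the null bicharacteristic relation, via propagation of singularities.} This is the heart of the argument and the main obstacle. I take a point $(\tx,\txi,\ty,\teta)$ in the $s$-wavefront set with $s=-\frac32-\epsilon+\tau$ and must show $(\tx,\txi)$ and $(\ty,-\teta)$ lie on a common null geodesic as parallel transports of each other. Arguing as in Lemma~\ref{diag2}: the inclusion Eq.~\eqref{inclusion}, $WF^{-\frac12-\epsilon+\tau}(\omega_A)\subset WF^{-\frac12-\epsilon+\tau}(\omega^{(2)}_{G})$, together with the propagation of singularities (Theorem~\ref{propagation}, applied to $\omega_A$ with $P_{(t,x)}$ and then $P_{(s,y)}$, using Remark~\ref{taylor} to certify $P^b_{(t,x)}\omega_A\in H^s_{loc}$ for the relevant $s$ in the admissible range $0<s<\tau-1$), shows that the wavefront set at level $-\frac12-\epsilon+\tau$ is a union of products $\gamma(\tx,\txi)\times\gamma(\ty,\teta)$ of full null bicharacteristics. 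Now transport both covectors to a common Cauchy surface $\Sigma_{t_0}$: if $(\tx,\txi)$ and $(\ty,-\teta)$ did \emph{not} lie on the same null geodesic, the intersection points $(t_0,x_0)$ and $(t_0,y_0)$ on $\Sigma_{t_0}$ would be distinct, hence causally unrelated, and then $(t_0,x_0,\cdot,t_0,y_0,\cdot)\in WF^{-\frac12-\epsilon+\tau}(\omega^{(2)}_{G})$ contradicts Lemma~\ref{causal}. (The diagonal degenerate case $\tx=\ty$ is exactly Lemma~\ref{diag2}.) Therefore the two null geodesics coincide, and since along a null geodesic the cotangent covector is parallel-transported, $(\tx,\txi)\sim(\ty,-\teta)$ in the sense of Eq.~\eqref{C}.

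\textbf{Step 3: conclusion.} Combining Steps 1 and 2: any $(\tx,\txi,\ty,\teta)\in WF^{-\frac32-\epsilon+\tau}(\omega^{(2)}_{G})$ satisfies $g^{ab}(\tx)\txi_a\txi_b=g^{ab}(\ty)\teta_a\teta_b=0$, $(\tx,\txi)\sim(\ty,-\teta)$, $\txi^0>0$ and $(-\teta)^0>0$. By definition of $WF'$ (which flips the sign of the second covector) this says precisely $(\tx,\txi,\ty,-\teta)\in WF'^{-\frac32-\epsilon+\tau}(\omega^{(2)}_{G})$ lies in $C^+$; equivalently $WF'^{-\frac32-\epsilon+\tau}(\omega^{(2)}_{G})\subset C^+$. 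Since $\epsilon>0$ was arbitrary, the theorem follows, and by Definition~\ref{adiadi} (taking $N=-\frac32-\epsilon+\tau$, so $N+\frac32=\tau-\epsilon$) the ground state $\omega_G$ is an adiabatic state of order $N = \tau - \tfrac32 - \epsilon$ for every $\epsilon>0$, i.e.~the adiabatic order depends linearly on the metric regularity $\tau$. The main obstacle, as noted, is Step 2: carefully checking that the Sobolev orders produced by symbol smoothing (Remark~\ref{propagation1}, Remark~\ref{taylor}) keep $P^b_{(t,x)}\omega_A$ and $P^b_{(s,y)}\omega_A$ microlocally in $H^s$ for $s$ in the range where Theorem~\ref{propagation} applies, and that the bicharacteristic-flow argument transfers cleanly from $\omega_A$ back to $\omega^{(2)}_{G}=i\partial_t\omega_A$ via Eq.~\eqref{inclusion}.
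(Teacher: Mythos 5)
Your proof takes essentially the same approach as the paper's: start from a point of $WF^{-\frac32-\epsilon+\tau}(\omega^{(2)}_G)$, pass to $WF^{-\frac12-\epsilon+\tau}(\omega_A)$, propagate along null bicharacteristics via Theorem~\ref{propagation}, return to $\omega^{(2)}_G$ via Eq.~\eqref{inclusion}, then intersect with a Cauchy surface and combine Lemma~\ref{causal} and Lemma~\ref{diag2} to force the two bicharacteristics to coincide as a single null geodesic with parallel-transported covectors, finally invoking Theorem~\ref{positive} and the $WF'$ sign flip. The only cosmetic difference is that you invoke Lemma~\ref{causal} explicitly to force coincidence of the intersection points on the Cauchy slice (the paper does this implicitly by writing $x_1$ for both), and your phrasing ``not the same geodesic $\Rightarrow$ distinct intersection points'' is slightly imprecise since two different null geodesics may cross—the correct split is: distinct intersection points contradict Lemma~\ref{causal}, and coinciding points with Lemma~\ref{diag2} force $\teta_1=-\txi_1$ and hence the same geodesic—which you do effectively cover with your parenthetical appeal to Lemma~\ref{diag2}.
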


\begin{proof}
Let $(\txx,\txixi)=(\tx,\txi,\ty,-\teta)\in WF^{-\frac{3}{2}-{\epsilon}+\tau}(\omega^{(2)}_{G})
{\subset WF^{-\frac{1}{2}-\epsilon+\tau}(\omega_{A})} $, where the inclusion follows from  \cite[Proposition B.3]{adiabatic} since $\omega^{(2)}_{G}=i\partial_t\omega_{A}$. 
The propagation of singularities result (Theorem \ref{propagation})  implies that  $(\gamma(\tx,\txi),\gamma (\ty,-\teta))\in { WF^{-\frac{1}{2}-\epsilon+\tau}}(\omega_{A})$
 where  $\gamma(\tx,\txi)$ is the null bicharacteristic with initial data $(\tx,\txi)$ and $\gamma(\ty,-\teta)$ is the null bicharacteristic with initial data $(\ty,-\teta)$. Hence, by Eq.\eqref{inclusion}, we have  $(\gamma(\tx,\txi),\gamma (\ty,-\teta))\in {W F^{-\frac{1}{2}-\epsilon+\tau}}(\omega^{(2)}_{G})$.
Now we choose a Cauchy surface $\Sigma_{t_1}=\{t_1\}\times \Sigma$ and notice that $(t_1,x_1,{\xi}_1,t_1,x_1,{\teta}_1)= (\gamma(\tx,\txi),\gamma(\ty,\teta))\cap (\Sigma_{t_1}^2)$ must satisfy Lemma \ref{diag2} and therefore is of the form $(t_1,x_1,{\txi}_1,t_1,x_1,-{\txi}_1).$ 

{
This observation allows us to define the following curve $\tilde{\gamma}:(-\infty,\infty)\rightarrow M$ 
\begin{equation}
\tilde{\gamma}(t)=\begin{cases}
\Pi\gamma(\tx,\txi)(t)& t=(-\infty, t_1)\\
\Pi\gamma(\ty,-\teta)(-t)& t=(-t_1,-\infty)
\end{cases}
\end{equation}
where  $\Pi$ is the projection from $T^*(M\times M)$ to $M\times M$ and  we assume that $a<t_1<b$, where $\tilde{\gamma}(a)=\Pi\gamma(\tx,\txi)(a)=\tx$ and $\tilde{\gamma}(b)=\Pi\gamma(\ty,-\teta)(b)=\ty$. Moreover $g(\cdot, {\dot{\gamma}})|_{T_{\tx}M}=\txi, g(\cdot, {\dot{\gamma}})|_{T_{\ty}M}=\teta$ and therefore, $\tilde{\gamma}$ is a null geodesic between $\tx$ and $\ty$  with cotangent vectors $\txi$ at $\tx$ and $\teta$ at $\tx$ i.e. $(\txx,\txixi)\in C':=\{(\tx, \txi,\ty, -\tilde{\eta}) ;(\tx, \xi; \ty, \tilde{\eta}) \in C\}$.}

This shows 
\begin{equation}
{WF^{-\frac{3}{2}-\epsilon+\tau}}(\omega^{(2)}_{G})\subset C'
\end{equation}

Using  the definition of $WF^{l '}(u):= \{(\tx, \teta; \ty, -\tilde{\eta}) \in T^{*}(M\times M); (\tx, \txi; \ty, \tilde{\eta}) \in WF^l(u)\}$ and Theorem \ref{positive} gives the result.

\end{proof}

\begin{rem}
For a $C^{1,1}$ metric the same arguments  as used in \cite[Theorem 7.1]{causal} apply and therefore in that scenario we have for every $\epsilon>0$
$$WF^{\frac{1}{2}-\epsilon}(\omega^{(2)}_{G})\subset C'^+.$$
\end{rem}



\section{Appendix}
\subsection{Covariance of the Sobolev Wavefront Set under Diffeomorphisms}\label{diffeo}

\begin{Lemma}
Let $\varphi:M\rightarrow M$ be a $C^\infty$ diffeomorphism and $u\in \mathcal{D}'(M)$. Then 
\begin{equation}
WF^s(\varphi^* u)=\varphi^*WF^s(u),\quad s\in \R.
\end{equation}

\begin{proof}
Let $(x,\xi)\notin WF^s(u)$ which by definition implies { $(\varphi(x),^t\partial\varphi(x)^{-1}\xi)\notin \varphi^*WF^s(u)$.} Moreover, we can write $u=u_1+u_2$ where $u_1\in H^s_{loc}$ and $(x,\xi)\notin WF(u_2)$. By the covariance of the Sobolev spaces in compact sets \cite[Chapter 4, Section 2]{basictheory} we have $\varphi^*u_1\in H^s_{loc}$ and by the covariance under diffeomorphism of the wavefront set $(\varphi(x),^t\partial\varphi(x)^{-1}\xi)\notin WF(\varphi^*u_2)=\varphi^*WF(u_2)$. Putting this together gives $(\varphi(x),^t\partial\varphi(x)^{-1}\xi)\notin WF^s(\varphi^*u)$, i.e. $WF^s(\varphi^*u)\subset \varphi^*WF^s(u)$.

Conversely, let $(y,\eta)\notin WF^s(\varphi^*u)$. Similarly, we obtain $WF^s(\varphi^{-1*}\varphi^*u)\subset (\varphi^{-1*})WF^s(\varphi^*u)$ i.e. $(\varphi^{-1}(y),^t(\partial\varphi^{-1}(y))^{-1}\eta)\notin WF^s(\varphi^{-1*}\varphi^*u)=WF^s(u)$ which implies  $(y,\eta)\notin \varphi^*WF^s(u)$.

\end{proof}
\end{Lemma}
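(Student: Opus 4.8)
The plan is to reduce the statement to two classical facts: that pullback by a $C^\infty$ diffeomorphism maps $H^s_{\mathrm{loc}}$ into itself when one works over relatively compact coordinate patches, and that the ordinary $C^\infty$ wavefront set is covariant under diffeomorphisms, $WF(\varphi^*v)=\varphi^*WF(v)$. Since $WF^s$ is a local microlocal object, I may multiply $u$ by cutoffs and argue in charts throughout; in particular the membership $\varphi^*u_1\in H^s_{\mathrm{loc}}$ only ever needs to be checked on a relatively compact set, where $\varphi$ and $\varphi^{-1}$ have all derivatives bounded, and can be cited from \cite{basictheory}.

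First I would record the microlocal decomposition underlying $WF^s$: a covector $(x,\xi)\in T^*M\setminus 0$ fails to lie in $WF^s(u)$ if and only if $u$ can be written as $u=u_1+u_2$ with $u_1\in H^s_{\mathrm{loc}}$ (supported near $x$) and $(x,\xi)\notin WF(u_2)$. One direction is immediate from the definition; for the other, I would choose $\chi\in\mathcal D(M)$ with $\chi(x)\neq 0$ and a closed conic neighbourhood $\Gamma$ of $\xi$ on which $\langle\eta\rangle^{s}\,\widehat{\chi u}(\eta)\in L^2$, and split $\widehat{\chi u}$ using a $0$-homogeneous cutoff supported in $\Gamma$ and equal to $1$ near the ray through $\xi$: the inverse Fourier transform of the $\Gamma$-part is (up to a smooth, rapidly corrected term) an $H^s_{\mathrm{loc}}$ function $u_1$, while $u_2:=u-u_1$ has Fourier transform rapidly decreasing along $\xi$, hence $(x,\xi)\notin WF(u_2)$.

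With this in hand, the inclusion $WF^s(\varphi^*u)\subset\varphi^*WF^s(u)$ is the core step. If $(x,\xi)\notin WF^s(u)$, decompose $u=u_1+u_2$ as above, so $\varphi^*u=\varphi^*u_1+\varphi^*u_2$. By the compact-set covariance of Sobolev spaces, $\varphi^*u_1\in H^s_{\mathrm{loc}}$; by covariance of the $C^\infty$ wavefront set, $(\varphi(x),{}^t\partial\varphi(x)^{-1}\xi)\notin WF(\varphi^*u_2)$. Combining the two pieces, $(\varphi(x),{}^t\partial\varphi(x)^{-1}\xi)\notin WF^s(\varphi^*u)$, which is exactly the assertion that this point does not belong to $\varphi^*WF^s(u)$. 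The reverse inclusion then comes for free by applying the inclusion just proved to the diffeomorphism $\varphi^{-1}$ and the distribution $\varphi^*u$: this gives $WF^s(u)=WF^s\big(\varphi^{-1*}(\varphi^*u)\big)\subset\varphi^{-1*}WF^s(\varphi^*u)$, which rearranges to $\varphi^*WF^s(u)\subset WF^s(\varphi^*u)$.

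The only genuinely load-bearing steps are the decomposition lemma of the second paragraph and the $H^s_{\mathrm{loc}}$-invariance statement; everything else is bookkeeping with conic cutoffs and the chain rule for the transpose Jacobian. I expect the decomposition lemma to be the point requiring the most care, since one must arrange $u_1$ to be globally $H^s$ (not merely microlocally) while keeping $u-u_1$ smooth in the direction $\xi$; this is standard but worth stating precisely, as the whole argument rests on being able to peel off an $H^s$ piece in a way compatible with pullback.
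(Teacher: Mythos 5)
Your proposal is correct and follows essentially the same route as the paper: decompose $u=u_1+u_2$ with $u_1\in H^s_{\mathrm{loc}}$ and $(x,\xi)\notin WF(u_2)$, invoke the compact-set covariance of Sobolev spaces and the covariance of the smooth wavefront set, and obtain the reverse inclusion by applying the forward one to $\varphi^{-1}$. The only difference is that you also sketch a justification of the decomposition lemma, which the paper simply asserts.
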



{\bf{Acknowledgement.}}
We are grateful to  Chris Fewster, Bernard Kay and James Vickers for helpful discussions.  \\

\bibliographystyle{plain}
\bibliography{refs3}


\Addresses

\end{document}